\title{Neurons as Detectors of Coherent Sets in Sensory Dynamics}
\begin{document}

\author{%
  \hspace{-1mm}Joshua L. Pughe\mbox{-}Sanford\textsuperscript{1,*}\quad
  Xuehao Ding\textsuperscript{1,*}\quad
  Jason J. Moore\textsuperscript{1,2,*}\quad
  Anirvan M. Sengupta\textsuperscript{1,3}\\[2pt]
  {\bfseries
  \hspace{-6mm}Charles Epstein\textsuperscript{1}\quad
  Philip Greengard\textsuperscript{1}\quad
  Dmitri B. Chklovskii\textsuperscript{1,2}}\\[2pt]
 \hspace{-5mm} \textsuperscript{1}Center for Computational Neuroscience, Flatiron Institute, Simons Foundation, New York, NY, USA\\
  \hspace{-6mm}\textsuperscript{2}Neuroscience Institute, NYU Langone Medical Center, New York, NY, USA\\
  \hspace{-5mm}\textsuperscript{3}Physics Department, Rutgers University, New Brunswick, NJ, USA\\[2pt]
  \hspace{-2mm}\texttt{\{jpughesanford, xding, cepstein, pgreengard, mitya\}@flatironinstitute.org,}\\
  \hspace{-2mm}\texttt{jason.moore@nyulangone.org,}\;
  \texttt{anirvans.physics@gmail.com}\quad
  \textsuperscript{*}\,Equal contribution
}

\maketitle
\vspace{-5mm}
\begin{abstract}
We model sensory streams as observations from high-dimensional stochastic dynamical systems and conceptualize sensory neurons as self-supervised learners of compact representations of such dynamics. From prior experience, neurons learn {\it coherent sets}—regions of stimulus state space whose trajectories evolve cohesively over finite times—and assign membership indices to new stimuli. Coherent sets are identified via spectral clustering of the {\it stochastic Koopman operator (SKO)}, where the sign pattern of a subdominant singular function partitions the state space into minimally coupled regions. For multivariate Ornstein–Uhlenbeck processes, this singular function reduces to a linear projection onto the dominant singular vector of the whitened state-transition matrix. Encoding this singular vector as a receptive field enables neurons to compute membership indices via the projection sign in a biologically plausible manner. Each neuron detects either a {\it predictive} coherent set (stimuli with common futures) or a {\it retrospective} coherent set (stimuli with common pasts), suggesting a functional dichotomy among neurons. Since neurons lack access to explicit dynamical equations, the requisite singular vectors must be estimated directly from data, for example, via past–future canonical correlation analysis on lag-vector representations—an approach that naturally extends to nonlinear dynamics. This framework provides a novel account of neuronal temporal filtering, the ubiquity of rectification in neural responses, and known functional dichotomies. Coherent-set clustering thus emerges as a fundamental computation underlying sensory processing and transferable to bio-inspired artificial systems.
\end{abstract}

Neurons in early sensory areas are traditionally thought to extract from recent inputs low-dimensional latent variables that are maximally informative about the near future \cite{tishby2000information,bialek2006efficient,palmer2015predictive,chalk2018toward}. Such extraction exploits statistical regularities acquired over evolutionary, developmental, and behavioral timescales from previously encountered natural stimuli \cite{attneave1954some,barlow1961possible,simoncelli2001natural}. To formalize this intuition for temporally correlated sensory stimuli, we postulate that they are generated by high-dimensional, potentially nonlinear, stochastic dynamical processes, and conceptualize neurons as self-supervised learners of \emph{coherent sets}—regions of the stimulus state space that evolve cohesively over finite time intervals~\cite{dellnitz1999approximation,froyland2013analytic}—thus enabling compact representations of sensory dynamics.

Coherent sets can be uncovered via spectral clustering of the \emph{stochastic Koopman operator (SKO)}—a linear, albeit infinite‑dimensional, operator that evolves observables over a finite time interval \cite{froyland2013analytic,klus2024dynamical}. The sign of the first non‑trivial (subdominant) singular function of the SKO partitions state space into two minimally interacting coherent sets (Fig. 1a). Accordingly, a neuron can compute a membership index of a new input by evaluating the sign of a subdominant singular function. Because singular values and functions remain real even for irreversible dynamics, this approach generalizes metastable set detection beyond the reversible cases that eigenfunction methods require \cite{wu2020variational}. 

We demonstrate that for multivariate Ornstein–Uhlenbeck (OU) process \cite{enwiki:1286449686}—a canonical example of linear stochastic dynamics and a reasonable model of summed input to a neuron \cite{tuckwell2012stochastic,DestexheOU_2001}—a subdominant singular function of the SKO corresponds to a projection of the input onto a singular vector of the whitened finite-time transition matrix, Fig. 1b. A neuron that stores this singular vector in its synaptic weights and temporal filter can compute the corresponding membership index via the sign of the weighted sum of the inputs. Although the conventional spectral clustering framework assumes stability, our singular function results extend to unstable systems by focusing on singular values closest to one. In addition to projecting on the right singular vector \emph{predicting} near-future inputs (Fig. 1b), projecting on the left one is also possible---\emph{retrospecting} the recent past (Fig. 1c). Because such projections require different synaptic weights and temporal filters they must be implemented by distinct neurons.

Recognizing that the underlying dynamical equations are not available to neurons,  a biologically plausible detection of coherent sets requires a data-driven algorithm that can infer them directly from observations. This can be done using past-future canonical correlation analysis (CCA) \cite{koltai2018,klus2024dynamical} which can be implemented locally~\cite{lipshutz2021biologically}. This algorithm has the additional advantage of being applicable to nonlinear dynamics: it relies on estimating a Galerkin projection of the SKO onto a chosen functional basis via Monte Carlo integration over observed data. If the upstream neurons implement such projection, the post-synaptic neuron could then locally learn a requisite singular vector and compute a membership index via the sign of the weighted sum of the inputs. 

Viewing neurons as coherent set detectors sheds light onto several longstanding neurophysiological observations. First, temporal receptive fields of neurons emerge naturally as subdominant singular vectors projecting input lag‑vector representations of dynamical states. Second, the ubiquity of response rectification in neurons, exemplified by the well‑known ON/OFF segregation in early visual circuits, is interpreted as a principled clustering mechanism. Finally, the theory predicts complementary neuronal classes that predict near future using predictive coherent sets or retrospect recent past using retrospective ones, consistent with known neuronal functional dichotomies such as tufted versus mitral cells in the olfactory bulb or non‑lagged versus lagged cells in the lateral geniculate nucleus (LGN). Thus, the detection of coherent sets can serve as a powerful algorithmic primitive for neural computation supporting prediction and retrospection. This offers insights into biological processes and could inspire future artificial neural networks.

The remainder of the paper is organized as follows. Section 1 discusses related work. Section 2 reviews the definition of transfer operators, coherent sets and their connection to spectral clustering through the singular functions of the SKO. In Section 3, we derive the central result: under OU dynamics, the subdominant singular functions of the SKO correspond to dot products between the state and the singular vectors of the whitened transition matrix. Section 4 reviews a data-driven algorithm for identifying subdominant singular functions, which extends naturally to nonlinear dynamics and defines neuronal units rectifying positive or negative parts of the subdominant singular function. Section 5 analyzes and reviews several experimental datasets, interpreting temporal receptive fields through the hypothesis that biological neurons cluster coherent sets.
\vspace{-5mm}
\begin{figure}[h]
  \centering
  \subfloat{\includegraphics[height=0.2\textwidth]{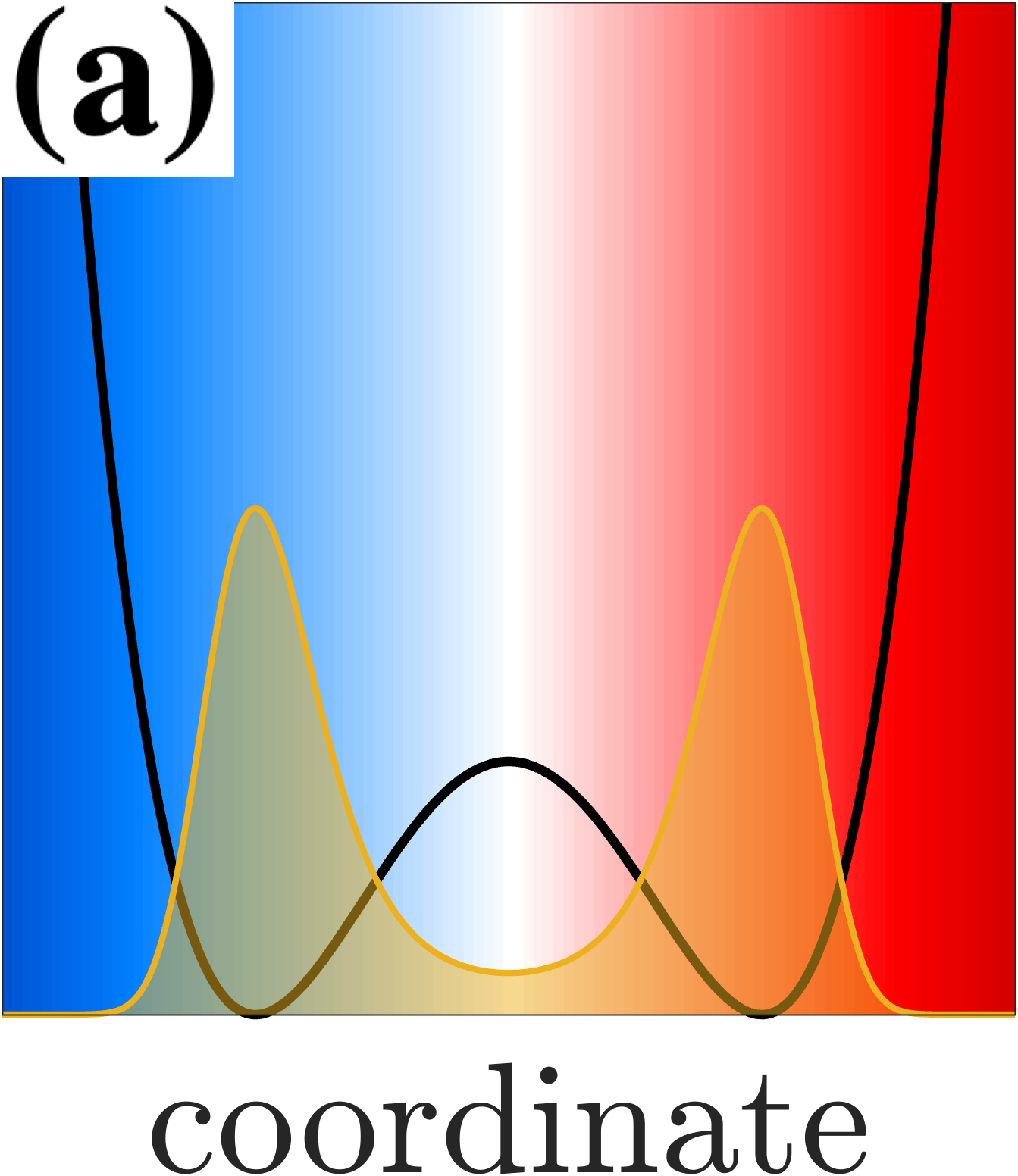}}\hspace{5mm}
  \subfloat{\includegraphics[height=0.2\textwidth]{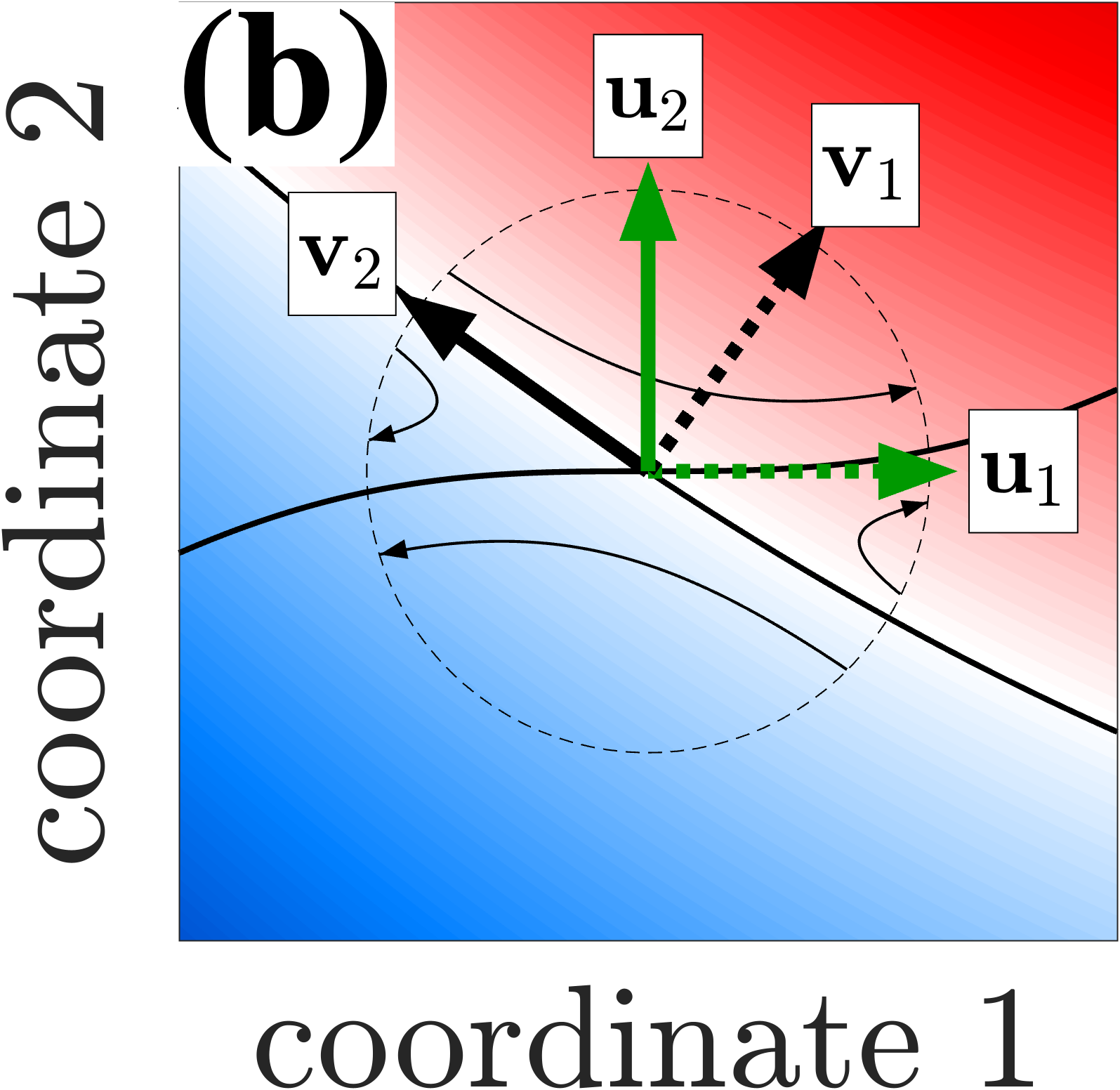}}\hspace{5mm}
  \subfloat{\includegraphics[height=0.2\textwidth]{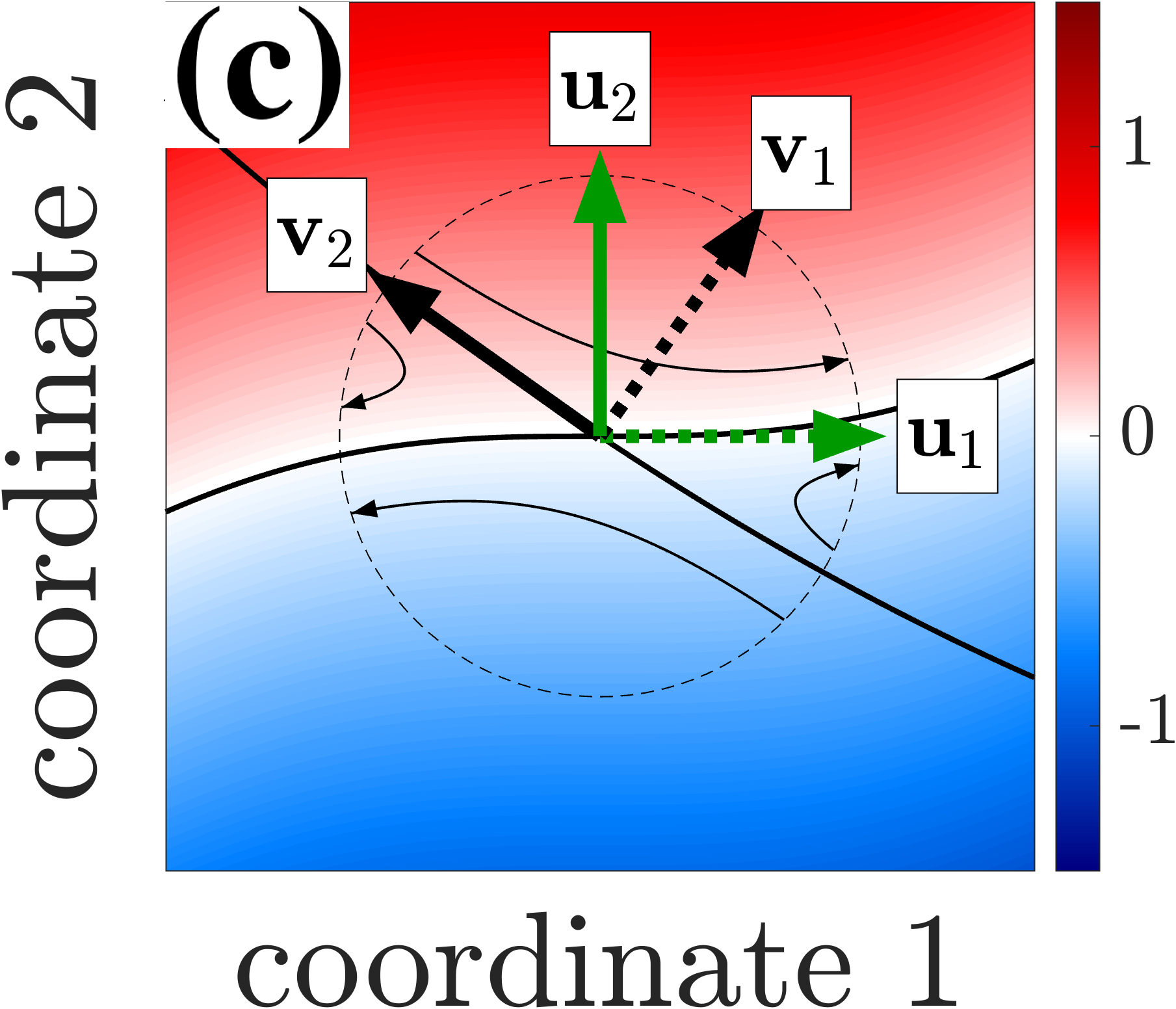}}
  \caption{Subdominant SKO singular functions partition states into coherent sets (blue vs red). {\bf (a)} For a one-dimensional double-well potential (black) the dominant singular function corresponds to a stationary distribution (yellow) and the subdominant singular function (red-blue) partitions the state space to minimize the leakage between the coherent sets. Left ({\bf b}) and right ({\bf c}) subdominant singular functions partition the state space near a 2D saddle point based on shared future and past, respectively. Black lines indicate attractive and repulsive invariant manifolds approximated by the linear subspaces in the vicinity of the saddle point (circle). Left (green) and right (black) singular vectors of the finite-time whitened transfer matrix, $[{\bf u}_1 {\bf u}_2]{\bf \Lambda}[{\bf v}_1 {\bf v}_2]^\top $. As the forecast horizon grows, ${\bf v}_1$ becomes orthogonal to the stable subspace and ${\bf u}_2$ becomes orthogonal to the unstable subspace.
  }
\end{figure}

\section{Related Work}

Viewing early sensory processing as efficient or predictive coding of natural stimuli has a long tradition~\cite{attneave1954some,barlow1961possible,simoncelli2001natural,price2022efficient,bialek2006efficient}. Closest to our work is the information bottleneck (IB) framework~\cite{tishby2000information,bialek2006efficient,palmer2015predictive,chalk2018toward}, which for Gaussian variables reduces to past–future CCA~\cite{chechik2003information}. Extending IB to dynamical systems and restricting compression to one bit corresponds to encoding the sign of the subdominant singular function of the SKO. Compared to slow feature analysis~\cite{wiskott2002slow,lipshutz2020biologically}, we incorporate an explicit dynamical-systems formulation and introduce a natural nonlinearity that enables hierarchical architectures without hand-crafted features.  Predictive and retrospective neuron classes have appeared in a lattice filter model of the visual pathway~\cite{gregor2012lattice}, though that work was limited to linear processing. Relative to predictive coding models~\cite{srinivasan1982predictive,rao1999predictive,druckmann2012mechanistic, gregor2012lattice}, where neurons emit prediction errors, our neurons encode coherent-set memberships. While the ON/OFF division has been attributed to metabolic efficiency~\cite{gjorgjieva2014benefits}, we propose a computational explanation. Our account of lagged and non-lagged LGN cells differs from earlier models~\cite{dong1995temporal} by not relying on nonlinearity, suggesting that these types can emerge alongside ON/OFF segregation.

Prediction and retrospection in saddle-point dynamics have been discussed in the context of unstable periodic orbits of chaotic attractors using the dominant mode of the local SKO (and its adjoint)~\cite{cvitanovic2012knowing,heninger2015neighborhoods}, whereas we focus on a subdominant singular function defining a coherent set pair~\cite{froyland2010transport,froyland2013analytic}. Saddle point analysis based on eigenfunctions instead of singular functions predicted neuronal filters orthogonal to growing exponentials, yielding predictive neurons without considering retrospection~\cite{golkar2024neuronal}.

Clustering has been previously proposed as a model of static neuronal computation on temporally uncorrelated inputs~\cite{pehlevan2017clustering}, capturing rectification and sparsity but not temporal receptive fields or sensory dynamics. Probabilistic coding frameworks such as the Bayesian brain hypothesis~\cite{ma2006bayesian} have been widely studied although we are not aware of the suggestion that single neurons represent eigenfunctions or singular functions of the SKO. Self-supervised learning has been applied to visual networks~\cite{singer2018sensory}, but these models typically omit analysis on the neuronal level. Koopman-based objectives have recently been incorporated into deep architectures to extract predictive features~\cite{choi2024koopman}. In contrast, we use SKO singular functions to define coherent sets~\cite{froyland2010transport,froyland2013analytic}, providing a direct explanation for rectification and a principled division into predictive and retrospective neurons.

Estimation of singular functions from data originated in molecular dynamics as the VAMP framework~\cite{wu2020variational}. VAMPnets~\cite{mardt2018vampnets} learn such functions with deep ReLU networks trained by backpropagation. In our model, features arise from rectified projections onto singular vectors within each neuron and can be hierarchically composed without backpropagation.

\section{Discovering coherent sets via spectral clustering of transfer operators}

We begin with the concept of coherent sets, which are regions of state space that tend to move as a whole under the stochastic dynamics over finite time intervals~\cite{froyland2010transport, froyland2013analytic}.  Formally,
a pair of sets $(\mathcal{A},\mathcal{B})$ is coherent if a state that begins in $\mathcal{A}$ at time $t$ is very likely to be found in $\mathcal{B}$ at time $t+\tau$:
$$\mathbb{P}[{\bf X}(t+\tau)\in \mathcal{B}\ |\ {\bf X}(t)\in \mathcal{A}]\approx 1,$$
where $\mathbb{P}[{ X} | {Y}]$ is a probability of ${X}$ given ${Y}$, ${\bf X}(t)\in \mathcal{X}$ denotes the random state of the system at time $t$. If in addition the probabilities of visiting $\mathcal{A}$ at time $t$ and $\mathcal{B}$ at time $t+\tau$ are equal, then this relation works in both directions: observing the system in $\mathcal{B}$ at time $t+\tau$ makes it likely that it was in $\mathcal{A}$ at time $t$. Thus, membership in coherent sets, $\mathcal{A}$ and $\mathcal{B}$, can be used to predict the future and retrospect the past, respectively, of a stochastic process. 

To compute coherent sets we utilize the stochastic Koopman operator (SKO), which encodes how \emph{observables} evolve in expectation~\cite{froyland2010transport, froyland2013analytic,klus2024dynamical}, see the Supplement, Section 1 for a concise introduction. An observable is any scalar function of ${\bf X}$ with a finite expectation value over the stochastic process. For instance, from the neuron's perspective, each synaptic input can be thought of as such an observable of stimuli. The SKO acts on an observable $f$
\begin{equation}
    \mathcal{K}_\tau f({\bf x})= \mathbb{E}[f({\bf X}(t+\tau))| {\bf X}(t) = {\bf x}],
\end{equation}
i.e. it maps the present observable to its expected value $\tau$ units of time later. Linearity of the expectation makes $\mathcal{K}_\tau$ a linear operator.

Let $\mu({\bf x})$ and $\nu({\bf x})$ denote the probability densities of the system’s state at times $t$ and $t+\tau$, respectively~\cite{froyland2010transport, froyland2013analytic,klus2024dynamical}. For any measurable region $\mathcal{A}\subseteq \mathcal{X}$,
\begin{align}
\mathbb{P}[{\bf X}(t) \in \mathcal{A}] &= \int_{\mathcal{A}} \mu({\bf x})\ d{\bf x}, &
\mathbb{P}[{\bf X}(t+\tau) \in \mathcal{A}] &= \int_{\mathcal{A}} \nu({\bf x})\ d{\bf x}.
\end{align}
These densities define inner products between functions $f$ and $g$ as
\begin{align}
\langle f, g \rangle_\mu &= \int_{\mathcal{X}} f({\bf x})^* g({\bf x})\, \mu({\bf x})\, d{\bf x}, &
\langle f, g \rangle_\nu &= \int_{\mathcal{X}} f({\bf x})^* g({\bf x})\, \nu({\bf x})\, d{\bf x}.
\end{align}
When the state space is divided into two pairs of regions, identifying the most coherent ones amounts to maximizing the following objective \cite{froyland2013analytic,klus2024dynamical}:
\begin{equation}
\label{eq:quotient}
\begin{aligned}
&\max_{\mathcal{A},\,\mathcal{B}} \Big\{
\mathbb{P}\left[{\bf X}(t+\tau)\in \mathcal{B}\,\Big|\,{\bf X}(t)\in \mathcal{A}\right]
+
\mathbb{P}\left[{\bf X}(t+\tau)\in \mathcal{B}^c\,\Big|\,{\bf X}(t)\in \mathcal{A}^c\right]
\Big\} \\
&= \max_{\mathcal{A},\,\mathcal{B}} \left\{
\frac{\langle {\bf 1}_{\mathcal{A}}\,, \mathcal{K}_\tau {\bf 1}_{\mathcal{B}} \rangle_\mu}
{\langle {\bf 1}_{\mathcal{A}}\,, {\bf 1}_{\mathcal{A}} \rangle_\mu}
+
\frac{\langle {\bf 1}_{\mathcal{A}^c}\,, \mathcal{K}_\tau {\bf 1}_{\mathcal{B}^c} \rangle_\mu}
{\langle {\bf 1}_{\mathcal{A}^c}\,, {\bf 1}_{\mathcal{A}^c} \rangle_\mu}
\right\},\\
&\text{subject to }\int_\mathcal{A} \mu({\bf x})\ d{\bf x} = \mathbb{P}[{\bf X}(t)\in \mathcal{A}]=\mathbb{P}[{\bf X}(t+\tau)\in \mathcal{B}] = \int_\mathcal{B} \nu({\bf x})\ d{\bf x},
\end{aligned}
\end{equation}
where $(\mathcal{A},\mathcal{B})$ are measurable subsets of the state space $\mathcal{X}$, and $(\mathcal{A}^c,\mathcal{B}^c)$ denote their complements. The indicator function ${\bf 1}_{\mathcal{A}}({\bf x})$ equals $1$ if ${\bf x}\in \mathcal{A}$ and $0$ otherwise. Large values of the objective correspond to pairs of regions that remain coherent under the dynamics---that is, regions that are least dispersive over the time interval $\tau$.

In practice, the maximizers of this quotient are well-approximated by certain singular functions of the SKO~\cite{froyland2010transport, froyland2013analytic,klus2024dynamical}. To see this, note that $\mathcal{K}_\tau$ has an adjoint operator $\mathcal{K}_\tau^\dagger$, defined so that
\begin{align}\label{eq:innerprod}
\langle f, \mathcal{K}_\tau g \rangle_\mu &= \langle \mathcal{K}_\tau^\dagger f, g \rangle_\nu.
\end{align}
The finite-time forward-backward and backward-forward operators 
\begin{align}
\label{FB}
    \mathcal{F}_\tau &= \mathcal{K}_\tau \mathcal{K}_\tau^\dagger, & \mathcal{B}_\tau &= \mathcal{K}_\tau^\dagger \mathcal{K}_\tau
\end{align}
are then self-adjoint and, under mild assumptions, compact \cite{klus2024dynamical,koltai2018}. As such, each admits a countable spectral decomposition in terms of orthonormal basis
\vspace{-2mm}
\begin{align}
    \mathcal{F}_\tau &= \sum_{i=0}^D \lambda_i^2 \, v_i \,\langle v_i , \cdot \rangle_\mu, &
    \mathcal{B}_\tau &= \sum_{i=0}^D \lambda_i^2 \, u_i \,\langle u_i , \cdot \rangle_\nu,
\end{align}
where $D$ may be infinite. The functions $u_i({\bf x})$ and $v_i({\bf x})$ are the singular functions of $\mathcal{K}_\tau$, satisfying 
\begin{align}
     [\mathcal{K}_\tau u_i]({\bf x})&=\lambda_i { v}_i({\bf x}), &  [\mathcal{K}_\tau^\dagger v_i]({\bf x})&=\lambda_i { u}_i({\bf x}).
\end{align}
Pairs of singular functions with large $\lambda_i$ are approximate maximizers of \eqref{eq:quotient}. Moreover, the sets 
\begin{align}
\mathcal{V}^{\pm}_i &= \{ {\bf x} \; : \; \pm v_i({\bf x}) > 0 \}, &
\mathcal{U}^{\pm}_i &= \{ {\bf x} \; : \; \pm u_i({\bf x}) > 0 \},
\end{align}
approximate coherent sets, where $\mathcal{U}^{\pm}_i$ is approximately the image of $\mathcal{V}^{\pm}_i$ under the dynamics.

The principal singular functions of the SKO are trivial, ${u}_0({\bf x}) = { v}_0({\bf x})=1$: they define the trivial coherent set, $\mathcal{U}_0=\mathcal{V}_0=\mathcal{X}$. The subdominant singular functions define the least dispersive non-trivial coherent set pair of the dynamics, $(\mathcal{V}^{\pm}_i,\mathcal{U}^{\pm}_i)$. We propose that the neurons compute membership indices by evaluating the signs of such singular functions. Left singular functions look forward in time predicting the future. Right singular functions look backward in time allowing to retrospect past events. Both types of measurements are important in ascertaining the state of partially observed dynamical systems, and we predict the existence of both predictive and retrospective neurons.

\section{Koopman singular functions for Ornstein-Uhlenbeck processes}

Identifying coherent sets through the subdominant singular functions of the SKO provides a rigorous theoretical framework, whose application requires computing these singular functions. In this section, we attempt to provide an intuitive picture by considering a linear stochastic dynamical system --- OU process --- for which subdominant SKO singular functions can be found in the closed form providing much needed intuition. The temporal dynamics of neuronal input currents elicited by sensory stimuli has long been approximated by the OU process~\cite{tuckwell2012stochastic,DestexheOU_2001}:
\begin{equation}
    \frac{d\mathbf{x}(t)}{dt}=\mathbf A\mathbf x(t)+\mathbf \xi(t),
\end{equation}
where $\langle\mathbf \xi(t) \mathbf \xi(t')^\top\rangle=\mathbf D\delta(t-t')$. Here, we assume that $\bf A$ is a real matrix with real eigenvalues and eigenvectors. We are particularly interested in saddle-point OU for the following reasons. If a critical point is purely repulsive, it will not be visited by the autonomous dynamics and, hence, will be physically irrelevant. Applying this approach to nearly isotropic attractive critical points gives partitions that are not ‘distinguished’~\cite{froyland2013analytic}, i.e. do not represent a genuine clustering of states which could describe qualitatively different parts of the phase space.

The probability density of the state variable $\bf{x}$, $p({\bf x}, t)$, evolves according to the forward Kolmogorov equation (aka the Fokker-Planck equation) utilizing the forward Kolmogorov operator $\mathcal{L}$ \cite{pavliotis2014stochastic}:
\begin{equation}
\label{FP}
\frac{\partial p({\bf x},t)}{\partial t}= \mathcal{L}\, p({\bf x},t) \equiv -\nabla\cdot \left({\bf A x} \, p({\bf x},t) \right) + \frac{1}{2} \nabla\cdot ({\bf D}\nabla p({\bf x}, t)).
\end{equation}
The dynamics of measurement expectation $g({\bf x},t) = \mathbb{E}[g({\bf X}(t))|{\bf X}(0) = x]$ are given by the backward Kolmogorov equation: 
\begin{equation}
\frac{\partial g({\bf x},t)}{\partial t}=\mathcal{L}^\dagger g({\bf x}, t) \equiv ({\bf A x})\cdot\nabla \, g({\bf x},t) + \frac{1}{2} \nabla\cdot ({\bf D}\nabla g({\bf x}, t)),
\end{equation}
where $\mathcal{L}^\dagger$ is the adjoint of the forward Kolmogorov operator with respect to the standard Euclidean inner product, serving as the generator of the SKO:
\begin{align}
\mathcal{K}_\tau = \exp{(\mathcal{L}^\dagger\tau)}.
\end{align}
The stationary distribution of probability density under $\mathcal{L}$, satisfying $\mathcal{L}\rho_0=0$, is derived in the Supplement, Section 2: 
\begin{equation}
\label{Sylvester}
\rho_0({\bf x}) \sim \exp\left( -\frac{1}{2} {\bf x}^\top {\bf \Sigma}^{-1} {\bf x} \right),
\end{equation}
where ${\bf x}\in\mathbb{R}^n$ and $\bf \Sigma$ is a solution of the Lyapunov equation, 
\begin{equation}
\label{lyapunov1}
{\bf A \Sigma} +  {\bf \Sigma A}^\top = -{\bf D}.
\end{equation}
For attractive dynamics, where all eigenvalues of $\bf A$ have negative real parts, Eq. \eqref{lyapunov1} has a unique positive-definite solution corresponding to the covariance matrix, ${\bf \Sigma} = \mathbb{E}\left[{\bf x}{\bf x}^\top\right]$. For repulsive or saddle-point dynamics, Eq. \eqref{lyapunov1} is a Sylvester equation, which has a unique solution if and only if $\bf A$ has no eigenvalues related by sign reversal \cite{bhatia1997and}. Thus, for a generic choice of $\bf A$, Eq. \eqref{lyapunov1} has a unique solution that is symmetric, real, and invertible. For repulsive directions, corresponding to eigenvalues of $\bf A$ with positive real parts, Eq. \eqref{Sylvester} characterizes how fast the density grows away from the fixed point~\cite{kwon2005structure}.



Under stationarity, $\mu({\bf x})=\nu({\bf x})=\rho_0({\bf x})$, the adjoint of $\mathcal{K}_\tau$ with respect to $\rho_0({\bf x})$ is given by
\begin{align}
\mathcal{K}^\dagger_\tau &=\text{diag}(\rho_0)^{-1}\exp{(\mathcal{L}\tau)}\text{diag}(\rho_0),
\end{align}
which is termed the reweighted Perron-Fronbenius operator and can also be interpreted as the SKO of the reverse-time dynamics. We then search for the eigenfunctions of the operators $\mathcal{F}_\tau$ and $\mathcal{B}_\tau$, Eq. \eqref{FB},
which are well defined on short time scales.

We find (see the Supplement, Section 3) a family of eigenfunctions of $\mathcal{F}_\tau,\mathcal{B}_\tau$ that can be expressed in terms of the linear projection of the state vector, $\mathbf x$:
\begin{align}
\label{eigen}
    \mathcal{F}_\tau ({\mathbf{v}_i}^\top{\bf x})=\lambda_i^2 ({\mathbf v_i}^\top{\bf x}),\quad \quad
    \mathcal{B}_\tau ({\mathbf{u}_i}^\top{\bf x})=\lambda_i^2 ({\mathbf u_i}^\top{\bf x}),
\end{align}
where $\mathbf u_i$ and $\mathbf v_i$ satisfy the following matrix eigenvector equations (see the Supplement, Section 3): 
\begin{align}
\label{sigma}
e^{{\bf A}^\top\tau}\Sigma^{-1}e^{{\bf A}\tau}\Sigma \mathbf v_i = \lambda_i^2 \mathbf v_i,\quad \quad
     \Sigma^{-1}e^{{\bf A}\tau}\Sigma e^{{\bf A}^\top\tau} \mathbf u_i = \lambda_i^2 \mathbf u_i.
\end{align}
Eigenvalues of transfer operators for expanding maps could be greater than one \cite{bandtlow2017spectral}. Since ideal coherent sets with zero leakage correspond to a unity eigenvalue, in this paper a subdominant eigenfunction refers to the non-trivial eigenfunction associated with an eigenvalue closest to one. That is, the subdominant eigenfunction for attractive (repulsive) dynamics is associated with the second largest (smallest) eigenvalue. We proved that the subdominant eigenfunctions belong to the linear family Eq. \eqref{eigen} for attractive and repulsive dynamics. Other eigenfunctions of $\mathcal{F}_\tau,\mathcal{B}_\tau$ are either a constant, associated with the eigenvalue one, that does not change sign, or represent higher-order polynomials in $\bf x$, associated with less dominant eigenvalues (see the Supplement, Section 3, which, however, does not consider saddle points). Therefore, to identify the least dispersive coherent sets, it is sufficient to focus on the eigenfunctions in  Eqs. \eqref{eigen}. The pair of minimally leaking coherent sets is given by the solution of Eqs. \eqref{sigma} with $\lambda_i^2$ closest to unity. To obtain the coherent set membership indices, we recover indicator functions, Eq. \eqref{eq:quotient}, by taking the sign of these singular functions. 

Eigenvectors ${\bf v}_i$ and ${\bf u}_i$ play a complementary role in prediction and retrospection. As the operator $\mathcal{F}_\tau$ propagates observables forward then backward in time, neurons projecting inputs onto ${\bf v}_i$ are predictive (Fig. 1b). As the operator $\mathcal{B}_\tau$ propagates the observable backward then forward in time, neurons projecting inputs onto ${\bf u}_i$ are retrospective (Fig. 1c). In the case of 2D saddle-point OU (Fig. 1b,c), when the shared past and future refer to other fixed points, the coherent sets of interest are the expanding coherent set for prediction (${\bf v}_1$) and the contracting coherent set for retrospection (${\bf u}_2$). In Section 4 of the Supplement, we analytically prove that, as the forecast horizon goes to infinity, singular vectors ${\bf v}_1$ and ${\bf u}_2$ converge to the top and bottom left eigenvector of $\bf A$ which are orthogonal to the stable and unstable invariant subspaces, respectively.

So far we considered fully observable OU processes. However, in reality, the OU process is often only partially observed for example via a linear projection of the state onto a scalar variable, 
\begin{align}
    y(t)= C {\bf x}(t).
\end{align}
Neurally, such projection can be an input to a single-synapse neuron or a summed total synaptic current. Assuming observability, the state can be represented by an $n$-dimensional lag vector \cite{katayama2005subspace}:
\begin{align}
    \hat{\bf x}(t)=[y(t), y(t-1), \dots, y(t-n+1)]^\top,
\end{align}
evolving via a companion matrix, equivalent to the following auto-regressive model:
\begin{align}
    y(t+1)=a_1y(t)+a_2y(t-1)+\cdots+a_ny(t-n+1)+\xi(t). 
\end{align}

\section{Coherent sets from data} \label{sec:datadriven}

The infinite-dimensional stochastic Koopman operator (SKO) can be approximated by a Galerkin projection onto a finite set of basis functions (or features), $\set{\phi_i(\mathbf{x})}_{i=1}^d$~\cite{williams2015,klus2024dynamical}. In the data-driven formulation,
the SKO matrix representation under this basis is~\cite{koltai2018,klus2024dynamical} (see Supplement, Section 5):
\begin{align}\label{eq:galerkin}
  {\bf K}_\tau &= {\bf \Sigma}_0^{-1}{\bf\Sigma}_\tau,
\end{align} 
where ${\bf \Sigma}_0$ and ${\bf \Sigma}_\tau$ are the covariance matrices: 
\vspace{-2mm}
\begin{equation}
\vspace{-2mm}
\label{phi}
    [{\bf \Sigma}_0]_{ij} \approx \frac{1}{S}\sum_{s=1}^S \phi_i({\bf X}(t_s)) \phi_j({\bf X}(t_s)),\quad\quad
    [{\bf \Sigma}_\tau]_{ij} \approx \frac{1}{S}\sum_{s=1}^S \phi_i({\bf X}(t_s)) \phi_j({\bf X}(t_s+\tau)),
\end{equation}
and describe the correlation between measurement functions $i$ and $j$ measured instantaneously, or with delay $\tau$, respectively. 

Such matrices are well defined if the process is stationary which may not be true for neurons. A common workaround is to assume local stationarity, that is, the distribution of inputs changes slowly compared to the timescale of the dynamics. In this case, the neuron can continuously update the estimate of the dynamics using exponential forgetting~\cite{widrow1985adaptive}. Our method can accommodate this type of non-stationarity by computing Eqs. \eqref{phi} locally in time, using a temporal filter that discounts older observations. This allows the estimated covariances to track gradual changes in the underlying distribution without assuming global stationarity.

As a direct consequence of Eq. \eqref{eq:innerprod}, the adjoint of ${\bf K}_{\tau}$ is ${\bf K}_\tau^\dagger={\bf \Sigma}_0^{-1}{\bf K}_\tau^\top {\bf\Sigma}_0$. The forward and backward operators have matrix representations~\cite{koltai2018,klus2024dynamical}:
\begin{align}
\label{cca}
 {\bf F}_\tau &={\bf K}_\tau{\bf\Sigma}_0^{-1}{\bf K}_\tau^\top {\bf\Sigma}_0 ={\bf\Sigma}_0^{-1}{\bf\Sigma}_\tau{\bf\Sigma}_0^{-1}{\bf\Sigma}_{-\tau}, & {\bf B}_\tau& ={\bf\Sigma}_0^{-1}{\bf K}_\tau^\top {\bf\Sigma}_0{\bf K}_\tau={\bf\Sigma}_0^{-1}{\bf\Sigma}_{-\tau}{\bf\Sigma}_0^{-1}{\bf\Sigma}_\tau,
\end{align}
which reduce to Eq. \eqref{sigma} for attractive OU processes (see the Supplement, Section 6).
The singular functions of the SKO within the Galerkin projection are given by 
\begin{align}\label{eq:ddsingularvectors}
    v_i({\bf x}) &= {\vec{ v}}_i \cdot  \vec{\phi}({\bf x}),\ & u_i({\bf x}) &= {\vec{ u}}_i \cdot  \vec{\phi}({\bf x}),
\end{align}
where ${\vec{ v}}_i$ and ${\vec{ u}}_i$ are eigenvectors of the matrices ${\bf F}_\tau$ and ${\bf B}_\tau$, respectively. Notice that ${\vec{ v}}_i$ and ${\vec{ u}}_i$ are precisely the solution of past-future CCA.

By representing these eigenvectors in synaptic weights and temporal filters, a neuron may compute a leading singular function of the SKO restricted to the linear span of its input features. By indicating the sign of the singular function, the neuron then computes the membership index for each input. For the sake of simplicity, below we only consider the temporal component of the singular vector.
Even if the neuron has access to only a single synapse, it can retain a history of that synapse’s activity over $d$ consecutive time steps. This construction---known as delay embedding or delay coordinates \cite{arbabi2017}---yields also a Galerkin approximation. We take ${\bf x}\in\mathbb{R}^d$ to be a lag vector of the input current to the neuron over the previous $d$ intervals of time, such that ${\bf X}(t) = (I(t), I(t-\Delta t), \dots, I(t-(d-1)\Delta t))$, where $I(t)$ is the observed current at time $t$. In this regime, the basis functions are given by
\begin{equation}
    \vspace{-1mm}
    \phi_i({\bf x}) = x_i,
    \label{lag basis}
\end{equation}
i.e.  $\vec{\phi}({\bf x})$ is the identity function. Singular vectors act as temporal filters over the delayed signal,
to \vspace{-2mm}
\begin{align}\label{eq:temporalweights}
r_{pre}^{\pm, i}(t) &= H\left(\pm v_i({\bf X}(t))\right) = H\left(\pm\sum_{k=0}^{d-1} [\vec{ v}_i]_k I(t-k\Delta t)\right)\nonumber,\\
r_{ret}^{\pm,i}(t) &= H\left(\pm u_i({\bf X}(t))\right)= H\left(\pm\sum_{k=0}^{d-1} [\vec{ u}_i]_k I(t-k\Delta t)\right),
\end{align}
where $\vec{ u}_i$ and $\vec{ v}_i$ are singular vectors (cf. Eq. \eqref{eq:ddsingularvectors}), and the Heaviside function, $H(x) = 
1$ for $x \ge 0$, and $H(x)=0$ for $x < 0$.

Therefore, sensory streams can be clustered into coherent sets via data-driven Galerkin approximations of the SKO. By learning the singular vectors from the features represented by the activity of the upstream neurons and encoding them in the synaptic weight and temporal filters, biological neurons can cluster inputs using integrate-and-fire dynamics.

\section{Coherent set clustering perspective on neurophysiology}

Here we apply the coherent set clustering framework to biological neurons focusing on three observations: temporal receptive fields, neuronal rectification and predictive/retrospective properties. We restrict our consideration to the processing of a scalar time series viewed as a one-dimensional projection of a multidimensional state-space dynamics, Eq. \eqref{eq:temporalweights}. Such a scalar time series could represent an input to a single-input neuron or the total synaptic current into a multi-input neuron. 
Here, inspired by our result for OU processes that the relevant singular functions lie in the span of a lag vector basis, we choose Eq. \eqref{lag basis} even in the data-driven, non-linear setting. Even such simple basis choice produces tangible results. 

\subsection{Temporal receptive fields}

We consider early visual processing where a natural stimulus can be generated by emulating the movement of the retinal image due to self-motion or saccades by scanning a natural image or its model. In turn, natural images are commonly modeled by the "dead leaves" model partitioning the space into patches of different but uniform luminance with sharp transitions between them \cite{jeulin1996dead}. Therefore, the resulting input time series is a set of plateaus at different levels with sharp transitions between them, Fig. 2a. Such scalar time series models input to post-photoreceptor neurons for invertebrates such as {\it Drosophila}, where initial processing is segregated between adjacent "pixels" or total current in a vertebrate bipolar cell. 

We view this scalar time series as a linear projection of a high-dimensional dynamical system. Thus, we perform past-future CCA on the lag-vectors formed from the scalar time series. The canonical correlations reveal a spectral gap following the top two. The lag-vector space is partitioned into a pair of coherent sets in a canonical direction. The top right singular vector amounts to a low-pass filter (Fig. 2b) similar to sustained bipolar cells of the vertebrate retina or L3 neuron in {\it Drosophila}~\cite{ketkar2022first}. The second canonical direction acts as a smoothed temporal derivative (Fig. 2b) in general agreement with experimentally reported filters of transient bipolars in vertebrates, Fig. 2c or L1 and L2 cells in {\it Drosophila} \cite{srinivasan1982predictive, ketkar2022first}. Because the stimulus is symmetric with respect to time-reversal, the complementary left singular vectors are obtained by simply inverting the time axis (and inverting the sign for the second singular vector).

We further characterize the experimentally measured temporal responses by interpreting them from the perspective of predictive and retrospective coherent sets. To make this connection, we take advantage of the observation that a practical predictive filter must be significantly aligned with an unstable eigendirection and a practical retrospective filter must be significantly aligned with a stable eigendirection, Fig. 1b,c. Moreover, for a two-dimensional (hyperbolic) saddle point, the linear temporal filter of the predictive neuron must be orthogonal to the attractive subspace, while the linear temporal filter of the retrospective neuron must be orthogonal to the repulsive subspace, Fig. 1b,c, see Section 3 and the Supplement, Section 4. Repulsive modes in the lag-vector space are expanding exponentials in real time and attractive modes are contracting exponentials, suggesting a simple way to interpret linear temporal receptive fields by computing the cosine similarity, $S$, with exponentials or, equivalently, computing normalized Laplace transforms, Fig. 2c,d. The experimentally measured temporal filter is orthogonal to a contracting exponential and aligned with the expanding exponential (Fig. 2c,d), indicating a predictive neuron. For details, see the Supplement, Section 7.
\begin{figure}[ht]
\vspace{-5mm}
  \centering
    \subfloat{\includegraphics[height=0.17\textwidth]{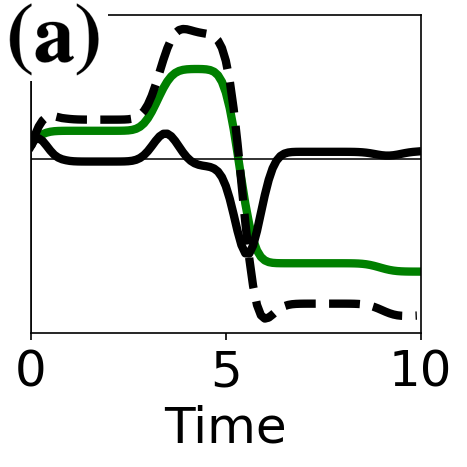}}\hspace{0mm}
    \subfloat{\includegraphics[height=0.17\textwidth]{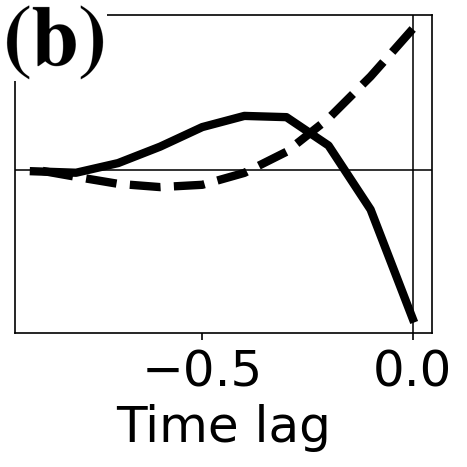}}\hspace{0mm}
    \subfloat{\includegraphics[height=0.17\textwidth]{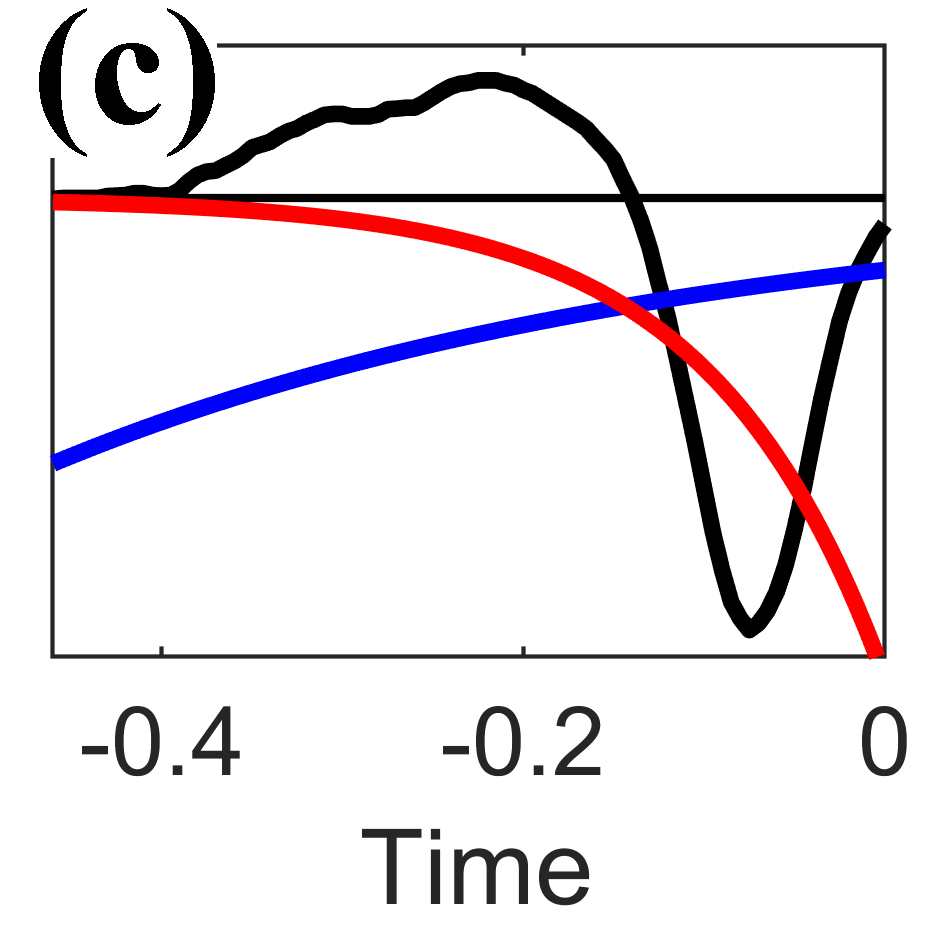}}\hspace{0mm}
    \subfloat{\includegraphics[height=0.17\textwidth]{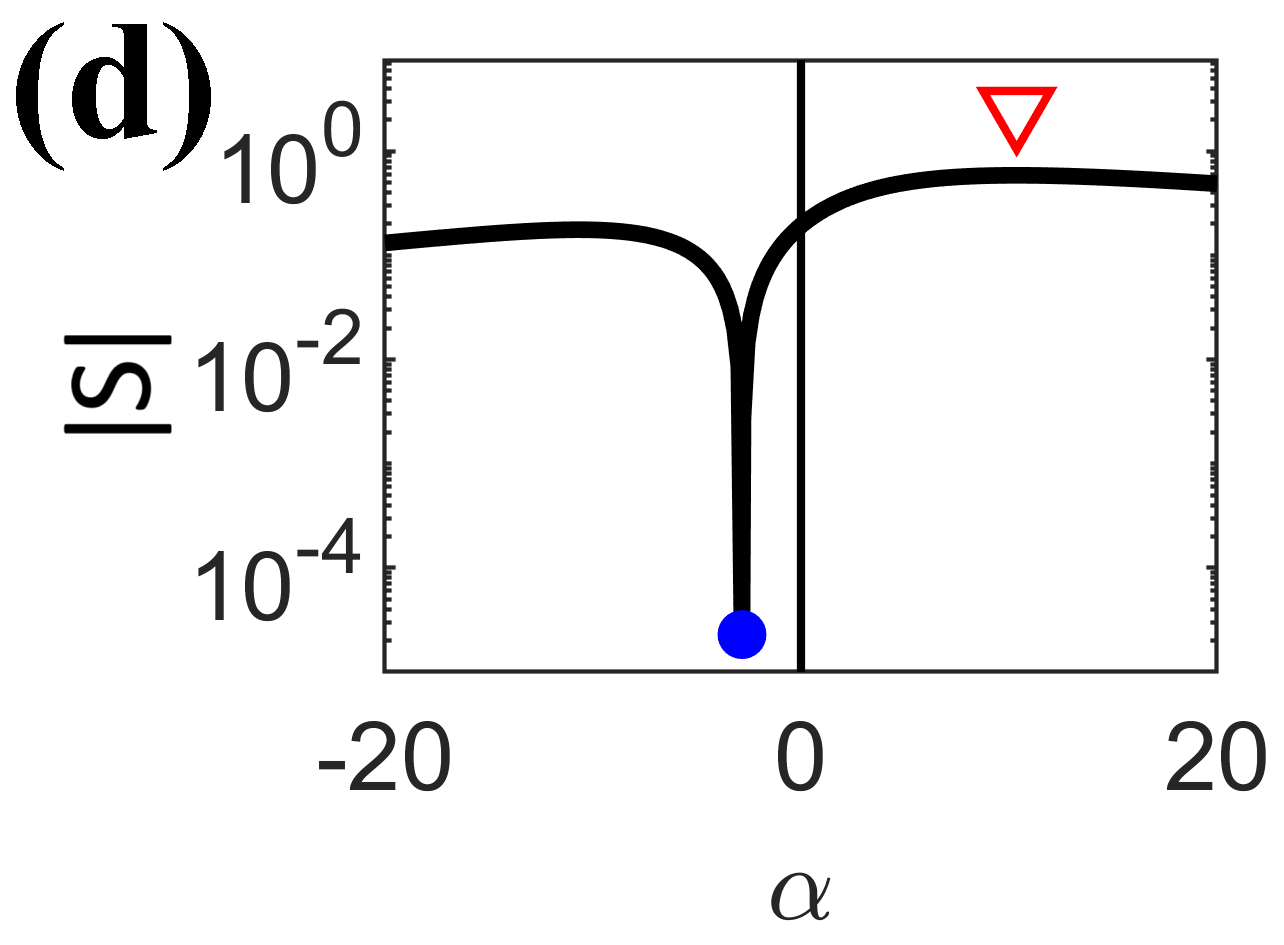}}
  \caption{Data-driven temporal filters learned on the scalar time series from the natural image model compared with that of a bipolar cell. 
  {\bf  (a)} Stimulus produced by the "dead leaves" model of natural images~\cite{jeulin1996dead} (green), stimulus filtered through first  (dashed black) and second (solid black) right singular vectors.
  {\bf (b)} First (dashed) and second (solid) right singular vectors from Eq. \eqref{cca} corresponding to predictive neurons. {\bf (c)} Black: experimentally measured flash response of the salamander retinal bipolar cell (D.B. Kastner \& S.A. Baccus, personal communication) on the inverted time axis approximates its linear filter. Compare this with with the second singular vector (solid) in {\bf b}. Note that the decay of the filter at zero time delay (absent in the theoretical result without additional constraints) is a consequence of causality and continuity of the filter implementation by a biological system. The maximally aligned (red) and the orthogonal (blue) exponentials are overlaid. {\bf (d)} Cosine similarities of the temporal filters with a battery of exponentials. Blue dot indicates orthogonality and red triangle indicates maximum alignment. Closer alignment with expanding exponentials and orthogonality to a contracting exponential indicates the neuron's predictive nature.
  }
  \vspace{-5mm}
\end{figure}
\subsection{Neuronal rectification exemplified by ON and OFF cells}

Our coherent set clustering perspective suggests that neurons determine membership indices by applying a Heaviside step-function to the SKO singular function (or its sign-inverse), Eq.\eqref{eq:temporalweights}. Such non-linearity could be naturally implemented by the spiking mechanism and correspond to the ON and OFF ganglion cells of the vertebrate retina, named this way because they respond to luminance increments and decrements, respectively. However, neurons post-synaptic to photoreceptors, bipolar cells in vertebrates and large monopolar cells in flies, are non-spiking. Neuronal activity is represented by continuously varying graded potentials that determine a non-negative synaptic vesicle release rate. Many such neurons thus rectify their input and are also classified as either ON or OFF cell. Such rectified response to luminance variation is graded and could be viewed as soft clustering by using not only the sign but also the magnitude of the singular function, Fig. 1, resulting in a rectified linear unit (ReLU)-like operation.  


\subsection{Predictive and retrospective properties of biological neurons}

In this Subsection, we analyze and review large datasets of experimentally measured temporal receptive fields from the perspective of predictive and retrospective coherent sets. 

\textbf{Tufted and mitral cells of the mammalian olfactory bulb.} In mammals, information from olfactory sensory neurons is relayed to the rest of the brain by two neuron classes: tufted cells (TCs) and mitral cells (MCs). We analyzed a dataset of the temporal receptive fields of 204 TCs and MCs recorded blindly from the rat olfactory bulb \cite{gupta2015olfactory} from the coherent set clustering perspective. As described above (5.1, Temporal Receptive Fields), we computed the cosine similarity of experimentally measured temporal filters with a battery of growing and decaying exponentials. In addition to identifying the orthogonal exponential, we also identified the sign of the exponent which has the highest cosine similarity with the temporal filter: positive exponents correspond to predictive neurons, and negative to retrospective. As a result, we found a mixture of predictive and retrospective properties, Fig. 3. Approximately half of the recorded cells had temporal filters orthogonal to decaying exponentials and, therefore, were likely predictive. About $4\%$ were orthogonal to growing exponentials and, hence, were likely retrospective. About one-third were orthogonal to both growing and decaying exponentials suggesting that they analyze a higher- than 2-dimensional dynamics and could be either retrospective or predictive. See the Supplement, Section 7 for details.

We speculate that TCs and MCs are mostly predictive and retrospective, respectively, based on the existing literature. First, they respond at different phases of the sniff cycle as monitored by air flow \cite{fukunaga2012two}: TCs are preferentially active during the exponential growth phase and silent during the exponential decay phase. Conversely, MCs are preferentially active during the exponential decay phase and silent during the exponential growth phase. Second, MC responses lag relative to TCs suggesting they play a different role \cite{fukunaga2012two,abraham2004maintaining,uchida2003speed,rinberg2006speed}, but see \cite{diaz2018inhalation, short2019temporal}.
Finally, TCs and MCs differ in their responses to synaptic inputs from the electrophysiologically stimulated olfactory sensory neurons~\cite{gire2012mitral}: TCs receive direct inputs and respond with an immediate biphasic profile typical of prediction, while MCs are activated indirectly with a delay characteristic of retrospection.
\begin{figure}[h!]
  \centering
  \includegraphics[width=0.8\textwidth]{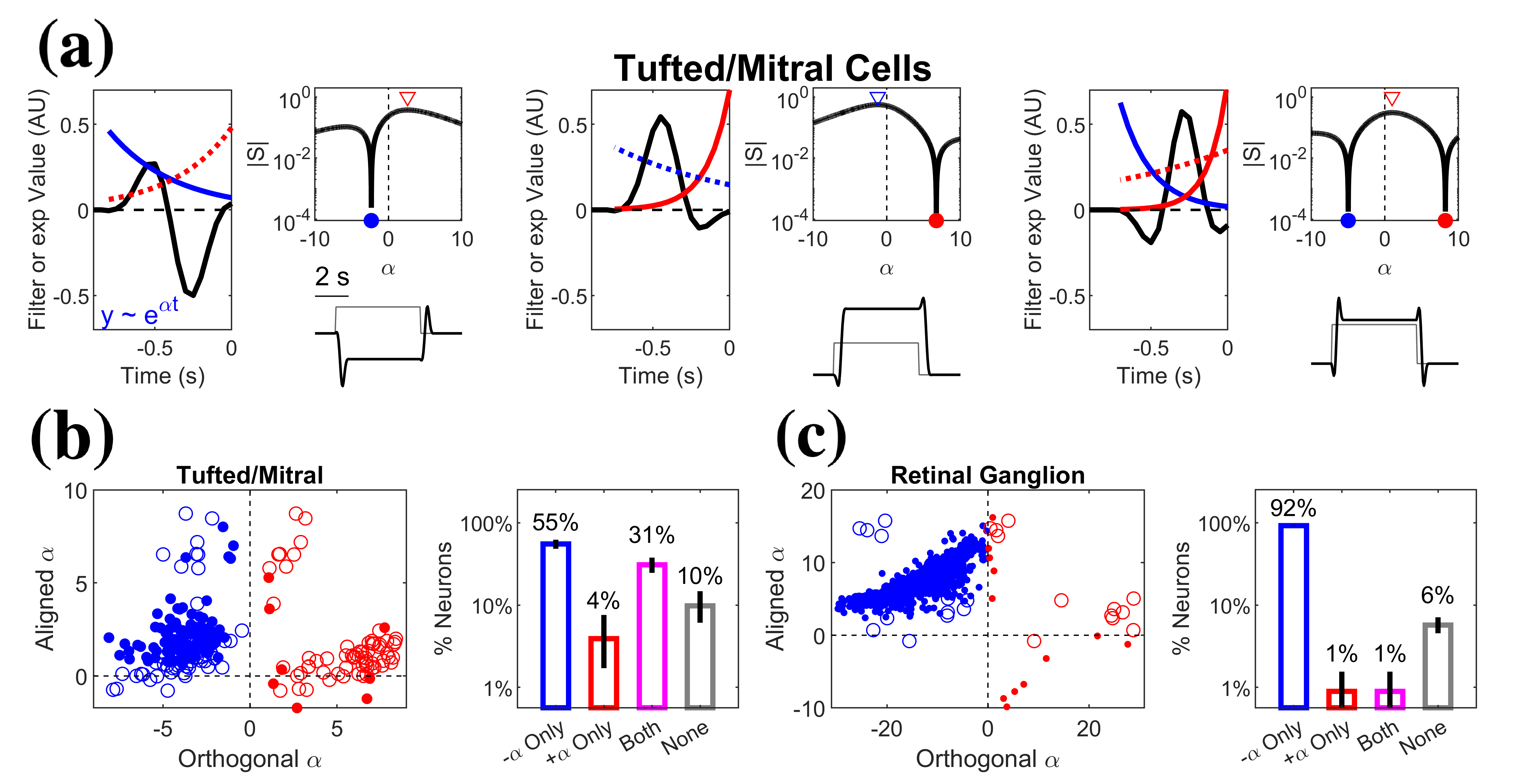}
  \vspace{-1mm}
  \caption{Temporal receptive fields of olfactory bulb and retinal neurons interpreted through the lens of coherent sets exhibit predictive and retrospective properties. {\bf (a)} Linear temporal filters (black) from three rat olfactory bulb tufted/mitral neurons~\cite{gupta2015olfactory}. Solid colored lines are orthogonal exponentials; dotted colored lines are aligned exponentials. Top insets show the cosine similarity, $S$, of the temporal filter with exponentials. Orthogonal and aligned values are marked by dots and triangles, respectively. Bottom insets show convolution of the filter with a step pulse. The first neuron is orthogonal to a decaying and aligned with expanding exponentials and interpreted as predictive. The second neuron is orthogonal to an expanding and aligned with decaying exponentials and interpreted as retrospective. The third neuron is orthogonal to both a decaying and expanding exponential but aligned to an expanding one. {\bf (b)} Left, for each tufted/mitral cell, the $\alpha$ value yielding an orthogonal exponential is plotted against the $\alpha$ that is maximally aligned. Filled dots indicate neurons orthogonal to a single $\alpha$. Open dots indicate neurons orthogonal to two $\alpha$ values. Right, The proportion of cells orthogonal to a negative exponential only, a positive exponential only, both, or neither is shown in the bar graph. 95\% confidence intervals of the mean obtained through a binomial fit are plotted as black lines. {\bf (c)} Same as {\bf b} but for retinal ganglion cells (RGCs). See Fig. S1 for example temporal receptive fields. Note that very few RGCs are orthogonal to expanding exponentials.}
  \vspace{-5mm}
\end{figure}

\textbf{Retinal ganglion cells (RGCs).} To explore whether the distributions of predictive and retrospective neurons vary across sensory modalities, we analyzed a large dataset of temporal receptive fields of 1345 RGCs from dissociated vertebrate retina \cite{KastnerBaccus2011}, Figs. S1, 3C. In vertebrates, signals from differentially stimulated photoreceptors are combined in the retina and, therefore, an RGC's receptive field has a spatial component. Here, we only focus on the temporal component obtained by a rank-1 decomposition of the spatio-temporal receptive field. The majority of RGCs' receptive fields were orthogonal to decaying exponentials and aligned with expanding exponentials, hence  predictive, though $1\%$ of cells were retrospective, Fig. 3C. See the Supplement, Section 7 for details.

\textbf{Lagged and non-lagged cells in the Lateral Geniculate Nucleus.} RGCs project to the Lateral Geniculate Nucleus (LGN) \cite{cai1997spatiotemporal}, in which two classes of relay cells have been described in cat \cite{mastronarde1987two, wolfe1998temporal, saul1990spatial}, monkey \cite{saul2008lagged}, and mouse \cite{piscopo2013diverse}: one with the temporal receptive fields of predictive cells and one with retrospective properties, Fig. S2. Because the firing of the latter class of neurons substantially lags a step stimulus onset, they were termed "lagged cells". Comparing the response profiles of our dataset of tufted/mitral cells to a step stimulus provides insight into how lagged cells may manifest, Fig. 3A. Note that the middle, retrospective, neuron is initially inhibited by the positive step response, but overcomes this inhibition due to the delayed positive filter lobe. Thus, we identify non-lagged and lagged neurons with predictive and retrospective units, respectively.
\section{Discussion}
We propose that neurons act as \emph{coherent–set detectors}: each unit computes a linear projection of its input onto a singular vector of a whitened finite–time transition matrix and then takes the sign or rectifies its positive or negative part. We derive this mechanism analytically for saddle–point Ornstein–Uhlenbeck (OU) dynamics and suggest that it extends to nonlinear systems via Galerkin projection on a chosen feature basis. In practice, a leading singular function can be learned directly from data, providing a path toward biologically plausible implementations (e.g., local weight updates). This perspective captures several physiological regularities, including rectification and selective sensitivity to temporally extended (finite–time) structure in stimuli as well as predictive and retrospective neuron types.

\textbf{Open issues:}\\
\textit{(i) Unstable modes and reference measures.} Section~3 characterizes SKO singular functions for the OU model, but linking them to the spectral clustering framework of Section~2 under \emph{instability} requires a reference measure that decays away from the saddle neighborhood. Practically, this suggests finite–time localization (windowing), exponential discounting, or committor/Doob–transformed weights so that the inner product prioritizes trajectories that remain near the saddle. Interpreting the OU model as a local description then would render the spectral objects well posed on that neighborhood.\\
\textit{(ii) Data–driven estimation along unstable directions.} The Galerkin procedure of Section~4 matches the OU analysis for stable modes; extending it to unstable modes likely requires modified forward–backward compositions over short horizons so that finite–time expansion (singular values (>1)) is preserved rather than suppressed by whitening. Establishing a rigorous equivalence in this regime remains open.\\
\textit{(iii) Beyond 2D saddles.} The correspondence between finite–time singular vectors and eigenvectors shown for 2D saddles may not extend to higher dimensions, especially with multiple unstable directions. Systematic analysis in higher–dimensional settings is needed.\\
\textit{(iv) Circuit feedback loops.} While many early sensory pathways are predominantly feedforward and fall within our framework, feedback is ubiquitous in deeper circuits. Incorporating explicit closed–loop interactions—for example, via the controller perspective of~\cite{moore2024neuron}—could unify coherent–set detection with action selection and behavior generation.\\
\textit{(v) Intra–neuronal feedback.} In our current account, rectification follows linear projection and does not influence learning of synaptic weights or temporal filters. In reality, synaptic plasticity may depend on rectified output via dendritic backpropagation. Incorporating such output into projection learning could provide the needed windowing for local learning mechanisms.

\textbf{Societal impact.}
By linking biological neural computation to finite-time dynamical structure, this work advances our understanding of brain function and may inform approaches to mental health and neurological disorders. It could also inspire biologically grounded, self-supervised architectures for artificial neural networks operating in dynamical settings.

\textbf{Acknowledgments.} We are grateful to  T. Ahamed, R. Cosentino, A. Genkin, B. Gutkin, J. M. Heninger, P. Kidd, A. Koulakov, I. Nemenman, S.E. Palmer, S. Qin, D. Rinberg, E. Schneidman, D. Schwab, E. Simoncelli, S. Solla, P. Thomas and, especially,  P. Cvitanovic for insightful discussions, to P. Gupta, D.F. Albeanu, D.B. Kastner, and S.A. Baccus for sharing their datasets. This research was supported in part by the National Science Foundation (NSF) EFRI BRAID: Scalable-Learning Neuromorphics, award No. 2318152. Some work was initiated and performed at the the Kavli Institute for Theoretical Physics (KITP) supported by grant NSF PHY-2309135 and at the Aspen Center for Physics, which is supported by NSF grant PHY-2210452.



\bibliographystyle{unsrt}
\bibliography{references}

\end{document}


\author{%
  \hspace{-1mm}Joshua L. Pughe\mbox{-}Sanford\textsuperscript{1,*}\quad
  Xuehao Ding\textsuperscript{1,*}\quad
  Jason J. Moore\textsuperscript{1,2,*}\quad
  Anirvan M. Sengupta\textsuperscript{1,3}\\[2pt]
  {\bfseries
  \hspace{-6mm}Charles Epstein\textsuperscript{1}\quad
  Philip Greengard\textsuperscript{1}\quad
  Dmitri B. Chklovskii\textsuperscript{1,2}}\\[2pt]
 \hspace{-5mm} \textsuperscript{1}Center for Computational Neuroscience, Flatiron Institute, Simons Foundation, New York, NY, USA\\
  \hspace{-6mm}\textsuperscript{2}Neuroscience Institute, NYU Langone Medical Center, New York, NY, USA\\
  \hspace{-5mm}\textsuperscript{3}Physics Department, Rutgers University, New Brunswick, NJ, USA\\[2pt]
  \hspace{-2mm}\texttt{\{jpughesanford, xding, cepstein, pgreengard, mitya\}@flatironinstitute.org,}\\
  \hspace{-2mm}\texttt{jason.moore@nyulangone.org,}\;
  \texttt{anirvans.physics@gmail.com}\quad
  \textsuperscript{*}\,Equal contribution
}

\maketitle
\section{Introduction to transfer operators}
This is only a concise primer, for more information, see~\cite{pavliotis2014stochastic,klus2024dynamical}.

\subsection{Discrete case}

We start from the discrete Markov chain for simplicity. Let $\mathcal X=\{x_1,\dots,x_N\}$. A one-step Markov transition is given by the kernel (matrix)
\[
P(y\mid x), \qquad \sum_{y} P(y\mid x)=1.
\]
Given an \emph{initial} distribution $\mu$ (row vector with $\mu(x)\ge 0$, $\sum_x \mu(x)=1$), the \emph{final} distribution after one step is
\[
\nu(y) =  \sum_{x} \mu(x)\, P(y\mid x).
\]

For an observable $f:\mathcal X\to\mathbb R$, the stochastic Koopman operator is
\[
(\mathcal K f)(x) = \sum_{y} P(y\mid x)\, f(y).
\]
Equip the input and output spaces with the weighted inner products
\[
\langle u,v\rangle_{\mu} = \sum_x u(x)\,v(x)\,\mu(x),
\qquad
\langle u,v\rangle_{\nu} = \sum_y u(y)\,v(y)\,\nu(y).
\]
The \emph{adjoint of $\mathcal K$ w.r.t.\ $(\mu,\nu)$} is the operator $\mathcal K^{\dagger}:L^2(\mu)\to L^2(\nu)$ defined by
\[
\langle \mathcal K f,\, g\rangle_{\mu} = \langle f,\, \mathcal K^{\dagger} g\rangle_{\nu}
\quad\text{for all } f,g.
\]
A direct computation yields, for any $y$ with $\nu(y)>0$,
\begin{equation}
(\mathcal K^{\dagger} g)(y)
= \frac{1}{\nu(y)} \sum_{x} \mu(x)\, P(y\mid x)\, g(x).
\label{eq:disc-adjoint}
\end{equation}
In matrix form, with $D_{\mu}=\mathrm{diag}(\mu)$ and $D_{\nu}=\mathrm{diag}(\nu)$,
\[
 \mathcal K^{\dagger} \;=\; D_{\nu}^{-1}\, P^{\!\top}\, D_{\mu}.
\]

The reverse-time kernel proposed by \cite{anderson1982reverse} is given by
\[
\tilde P(x\mid y) \;=\; \frac{\mu(x)\, P(y\mid x)}{\nu(y)}.
\]
Then \eqref{eq:disc-adjoint} can be rewritten as
\[
(\mathcal K^{\dagger} g)(y) \;=\; \sum_{x} \tilde P(x\mid y)\, g(x),
\]
i.e., $\mathcal K^{\dagger}$ is precisely the \emph{Koopman operator for the reverse-time Markov step} $y\mapsto x\sim \tilde P(\cdot\mid y)$ associated with pushing $\mu$ forward to $\nu=\mu P$.

\paragraph{Special cases.}
If $\mu=\nu=\pi$ is stationary and the chain is reversible ($\pi(x)P(y\mid x)=\pi(y)P(x\mid y)$), then $\tilde P(x\mid y)=P(x\mid y)$ and $\mathcal K^{\dagger}=\mathcal K$ (self-adjointness in $L^2(\pi)$).

\subsection{Continuous case}

Let $(\mathcal K_t)_{t\ge 0}$ be the Koopman semigroup with transition density $p_t(y\mid x)$:
\[
(\mathcal{K}_{t} f)(x) \;=\; \mathbb{E}[f(X_t)|X_0=x] \;=\; \int f(y)\, p_t(y\mid x)\, dy,
\]
\[
\mathcal{K}_t\mathcal{K}_s=\mathcal{K}_{t+s}.
\]
Given an initial distribution $\mu$ (measure or density), the pushed-forward distribution at time $t$ is
\[
\nu_t(dy) \;=\; \int \mu(dx)\, p_t(y\mid x)\, dy.
\]
With $\langle u,v\rangle_{\mu}=\int u(x)v(x)\,\mu(dx)$ and $\langle u,v\rangle_{\nu_t}=\int u(y)v(y)\,\nu_t(dy)$, the $(\mu,\nu_t)$-weighted adjoint $\mathcal{K}_{t}^{\dagger}:L^2(\mu)\to L^2(\nu_t)$ satisfies
\[
\langle \mathcal{K}_{t} f,\, g\rangle_{\mu} \;=\; \langle f,\, \mathcal{K}_{t}^{\dagger} g\rangle_{\nu_t},
\]
and is given by the reverse-time Koopman operator
\begin{equation}
\mathcal{K}_{t}^{\dagger} g(y) \;=\; \int g(x)\, \tilde p_t(x\mid y)\, dx,
\qquad
\tilde p_t(x\mid y) \;=\; \frac{\mu(x)\, p_t(y\mid x)}{\nu_t(y)}.
\label{eq:cts-adjoint}
\end{equation}
Thus the adjoint w.r.t.\ $(\mu,\nu_t)$ is the Koopman operator for the time-reversed process at horizon $t$ determined by $\mu$.

Consider an SDE $dX_t = a(X_t)\,dt + \sqrt{D}\, dW_t$. The generator of the Koopman operator can be calculated using Itô's formula as
\begin{equation}
\begin{aligned}
&\lim_{t\rightarrow 0}\frac{\mathcal{K}_tf(x)-f(x)}{t}\\
=& \mathbb{E}[\lim_{t\rightarrow 0}\frac{f(X_t)-f(x)}{t}|X_0=x]\\
=& a(x)\cdot \nabla f(x)+\tfrac12\, \nabla\cdot(D \nabla f(x))\\
=& \mathcal{L}^\dagger f(x).
\end{aligned}
\end{equation}

Thus, we obtain the equality $\mathcal{K}_{t} = e^{\mathcal L^\dagger t}$.

\section{The invariant measure}
Here we prove that the distribution given by Eq. (14, main text) is invariant under the action of the forward Kolmogorov operator, Eq. (11, main text):
\begin{equation}
\label{invariant}
\begin{aligned}
\mathcal{L}\rho_0({\mathbf x}) =&-\nabla\cdot [A\mathbf x \rho_0(\mathbf x)]+\frac{1}{2}\nabla\cdot [D\nabla \rho_0(\mathbf x)]\\
=&[-\Tr(A)+\mathbf x^\top A^\top \Sigma^{-1}\mathbf x - \frac{1}{2}\Tr(D\Sigma^{-1}) + \frac{1}{2}\mathbf x^\top \Sigma^{-1}D\Sigma^{-1}\mathbf x] \rho_0(\mathbf x)\\
=& [-\Tr(A) - \frac{1}{2}\Tr(D\Sigma^{-1})]\rho_0(\mathbf x)\\
=& -\frac{1}{2}\Tr(A-\Sigma A^\top \Sigma^{-1})\rho_0(\mathbf x)\\
=&0,
\end{aligned}
\end{equation}
where the Lyapunov equation, Eq. (15, main text),  was used to obtain the 3rd and the 4th line.

\section{Singular Spectrum of the SKO for OU processes}

In order to find the singular functions of the SKO, we search for the eigenfunctions of the \textbf{F}orward-backward and the \textbf{B}ackward-Forward operators:
\begin{align}
    \mathcal{F}_\tau &= e^{\mathcal{L}^\dagger\tau} \text{diag}(\rho_0)^{-1}e^{\mathcal{L}\tau}\text{diag}(\rho_0) & \mathcal{B}_\tau &= \text{diag}(\rho_0)^{-1}e^{\mathcal{L}\tau}\text{diag}(\rho_0) e^{\mathcal{L}^\dagger\tau}
\end{align}
where $\text{diag}(\rho_0)^{-1}e^{\mathcal{L}\tau}\text{diag}(\rho_0)$ is termed the reweighted Perron-Frobenius operator, which shares the domain with the SKO.

\subsection{Linear eigenfunctions}

We start by considering candidate eigenfunctions that are linear  w.r.t. the input state:
\begin{equation}
    q_\theta(\mathbf x) = \mathbf{\theta}^\top \mathbf x,
\end{equation}
where $\theta \in \mathbb{R}^n$. The actions of $\text{diag}(\rho_0)^{-1}\mathcal{L} \text{diag}(\rho_0)$ and $\mathcal{L}^\dagger$ on  $q_\theta (\mathbf x)$ are given by
\begin{equation}
\begin{aligned}
    &\text{diag}(\rho_0)^{-1}\mathcal{L} \text{diag}(\rho_0)q_\theta(\mathbf{x} )\\
    =& \rho_0(\mathbf{x})^{-1}[-\nabla\cdot(A \mathbf{x} \rho_0(\mathbf{x})\mathbf{\theta^\top x})+\frac{1}{2}D^{\alpha\beta}\partial_\alpha\partial_\beta(\rho_0(\mathbf x)\mathbf{\theta^\top x})]\\
    =& -\mathbf \theta^\top A \mathbf x - \mathbf x^\top \Sigma^{-1}D\mathbf \theta + \rho_0(\mathbf{x})^{-1}\mathbf{\theta^\top x} \mathcal{L}\rho_0(\mathbf x)\\
    =& (\Sigma^{-1}A\Sigma\mathbf{\theta})^\top\mathbf x,
\end{aligned}
\end{equation}
\begin{equation}
\begin{aligned}
&\mathcal{L}^\dagger q_\theta(\mathbf{x} )\\
= &(A\mathbf x)^\top\nabla(\mathbf\theta^\top\mathbf x)\\
= &(A^\top\mathbf\theta)^\top\mathbf x,
\end{aligned}
\end{equation}
where the Lyapunov equation, Eq. (15, main text),  and invariant density annihilation, Eq.\eqref{invariant}, were used. 

Then the actions of $\text{diag}(\rho_0)^{-1}e^{\mathcal{L} \tau} \text{diag}(\rho_0)$ and $e^{\mathcal{L}^\dagger \tau}$ on  $q_\theta (\mathbf x)$ are given by
\begin{equation}
\label{finite l reweighted}
    \begin{aligned}
    &\text{diag}(\rho_0)^{-1}e^{\mathcal{L} \tau} \text{diag}(\rho_0)q_\theta(\mathbf{x} )\\
    =&\text{diag}(\rho_0)^{-1}\sum_{i=0}^\infty \frac{\mathcal{L}^i\tau^i}{i!} \text{diag}(\rho_0)q_\theta(\mathbf{x} )\\
    =& (\sum_{i=0}^\infty \frac{\Sigma^{-1}A^i\tau^i\Sigma}{i!}\mathbf{\theta})^\top\mathbf x\\
    =& (\Sigma^{-1}e^{A \tau}\Sigma\mathbf{\theta})^\top\mathbf x,
\end{aligned}
\end{equation}

\begin{equation}
\label{finite l dagger}
\begin{aligned}
&e^{\mathcal{L}^\dagger \tau} q_\theta(\mathbf{x} )\\
=& \sum_{i=0}^\infty \frac{\mathcal{L}^{\dagger i}\tau^i}{i!}q_\theta(\mathbf{x} )\\
=& (\sum_{i=0}^\infty \frac{(A^\top)^{i}\tau^i}{i!}\mathbf \theta)^\top\mathbf x\\
=& (e^{A^\top \tau}\theta)^\top\mathbf x.
\end{aligned}
\end{equation}
Combining Eqs.\eqref{finite l reweighted}\eqref{finite l dagger}, denoting the matrices $\exp(A^\top \tau)\Sigma^{-1}\exp(A \tau)\Sigma$ and $\Sigma^{-1}\exp(A \tau)\Sigma \exp(A^\top \tau)$ by $P$ and $Q$, respectively, we have
\begin{equation}
    \mathcal{F}_\tau q_\theta(\mathbf x)=q_{P\mathbf{\theta}}(\mathbf x),\quad \quad
    \mathcal{B}_\tau q_\theta(\mathbf x)=q_{Q\mathbf{\theta}}(\mathbf x).
\end{equation}
Rewriting Eq. (18, main text) as 
\begin{align}
P \mathbf v_i = \lambda_i^2 \mathbf v_i,\quad \quad
     Q\mathbf u_i = \lambda_i^2 \mathbf u_i,
\end{align}
we obtain a set of linear eigenfunctions of the forward-backward and the backward-forward operators:
\begin{equation}
    \mathcal{F}_\tau q_{v_i}(\mathbf x)=\lambda_i^2 q_{v_i}(\mathbf x),\quad \quad
    \mathcal{B}_\tau q_{u_i}(\mathbf x)=\lambda_i^2 q_{u_i}(\mathbf x).
\end{equation}
So long as each set of eigenvectors are linearly independent, this method obtains all linear eigenfunctions $\mathcal{F}_\tau$ and $\mathcal{B}_\tau$. This is because the dimensionality of the subspace spanned by linear functions is equal to $n$, which equals the number of eigenfunctions found.

We note that our calculations apply regardless of the stability of the dynamics, i.e. whether the real parts of the eigenvalues of $\bf A$ are positive or negative. For stable $\bf A$, all $\{\lambda_i^2\}$ are less than one. The sign of the eigenfunction with the largest $\lambda_i^2$ partitions the state space into the minimally mixing 
coherent sets relative to the invariant density.
For unstable dynamics, an invariant density in the form of Eq. 14 of main text persists (ableit not normalizable) and all $\{\lambda_i^2\}$ are greater than one. In this setting, the sign of the eigenfunction with the smallest $\lambda_i^2$ partitions the state space into the minimally mixing coherent sets relative to the invariant density. Intuitively, one can view the unstable forward dynamics as a stable dynamics backwards in time of the dual variables \cite{anderson1982reverse, cvitanovic2012knowing, heninger2015neighborhoods}. 

\subsection{Quadratic eigenfunctions}
Here, we consider candidate eigenfunctions that are up to the second order in the input state:
\begin{equation}
    q_{M;c}(\mathbf x)=\mathbf{x}^\top M\mathbf x+c,
\end{equation}
where $M$ is an $n \times n$ symmetric matrix, $c$ is a scalar. It will soon become clear why we need an extra constant scalar rather than a linear term in the function. Since $\mathcal{L}c=\mathcal{L}^\dagger c=0$, $q_M(\mathbf x)$ will be used in the derivation below for brevity.\\

We first determine the actions of $\text{diag}(\rho_0)^{-1}\mathcal{L} \text{diag}(\rho_0)$ and $\mathcal{L}^\dagger$ on $q_M(\mathbf x)$.
\begin{equation}
\begin{aligned}
    &\text{diag}(\rho_0)^{-1}\mathcal{L} \text{diag}(\rho_0)q_M(\mathbf{x} )\\
    =& \rho_0(\mathbf{x})^{-1}[-\nabla\cdot(A \mathbf{x} \rho_0(\mathbf{x})\mathbf{x^\top}M\mathbf x)+\frac{1}{2}D^{\alpha\beta}\partial_\alpha\partial_\beta(\rho_0(\mathbf x)\mathbf{x^\top}M\mathbf x)]\\
    =& -2\mathbf x^\top M A \mathbf x - 2\mathbf x^\top \Sigma^{-1}DM\mathbf x + \Tr(DM)\\
    =& 2\mathbf x^\top\Sigma^{-1}A\Sigma M\mathbf x - 2\Tr(A\Sigma M)\\
    =& \mathbf x^\top (\Sigma^{-1}A\Sigma M + M\Sigma A^\top \Sigma^{-1})\mathbf x - 2\Tr(A\Sigma M),
\end{aligned}
\end{equation}
\begin{equation}
\begin{aligned}
&\mathcal{L}^\dagger q_M(\mathbf{x} )\\
= &(A \mathbf x)^\top\nabla(\mathbf x^\top M\mathbf x)+\frac{1}{2}D^{\alpha\beta}\partial_\alpha\partial_\beta (\mathbf x^\top M\mathbf x)\\
= &2\mathbf x^\top A^\top M \mathbf x - 2\Tr(A\Sigma M)\\
= & \mathbf x^\top (A^\top M + MA) \mathbf x - 2\Tr(A\Sigma M),
\end{aligned}
\end{equation}
where the Lyapunov equation and the fact $\mathcal{L}\rho_0 = 0$ have been used.

We noticed that
\begin{equation}
\begin{aligned}
&(\mathcal{L}^{\dagger})^2q_M(\mathbf x)=\mathbf x^\top ((A^\top)^2 M + 2A^\top MA + MA^2) \mathbf x + const.\\
&(\mathcal{L}^{\dagger})^3q_M(\mathbf x)=\mathbf x^\top ((A^\top)^3 M + 3(A^\top)^2MA + 3A^\top MA^2 + MA^3) \mathbf x + const.\\
&...\\
&(\mathcal{L}^{\dagger})^mq_M(\mathbf x) = \mathbf x^\top(\sum_{i=0}^mC_m^i(A^\top)^{m-i}MA^i)\mathbf x+ const.
\end{aligned}
\end{equation}
Thus, the action of $e^{\mathcal{L}^\dagger \tau}$ on $q_M (\mathbf x)$ is given by
\begin{equation}
\begin{aligned}
e^{\mathcal{L}^{\dagger}\tau}q_M(\mathbf x)=&\mathbf x^\top(\sum_{m=0}^\infty \frac{\tau^m}{m!}\sum_{i=0}^mC_m^i(A^\top)^{m-i}MA^i)\mathbf x+ const.\\
=& \mathbf x^\top(\sum_{i=0}^\infty \sum_{m=i}^\infty \frac{(A^\top)^{m-i}\tau^{m-i}}{(m-i)!}M\frac{A^{i}\tau^{i}}{i!})\mathbf x+ const.\\
=& \mathbf x^\top(e^{A^\top \tau}Me^{A \tau})\mathbf x + const.
\end{aligned}
\end{equation}
Similarly, the action of $\text{diag}(\rho_0)^{-1}e^{\mathcal{L} \tau} \text{diag}(\rho_0)$ on $q_M (\mathbf x)$ is given by
\begin{equation}
    \text{diag}(\rho_0)^{-1}\mathcal{L}^{m} \text{diag}(\rho_0)q_M(\mathbf x) = \mathbf x^\top(\sum_{i=0}^mC_m^i\Sigma^{-1}A^{m-i}\Sigma M \Sigma (A^\top)^{i}\Sigma^{-1})\mathbf x+ const.
\end{equation}
\begin{equation}
    \text{diag}(\rho_0)^{-1}e^{\mathcal{L} \tau} \text{diag}(\rho_0)q_M(\mathbf x) =\mathbf x^\top(\Sigma^{-1}e^{A \tau}\Sigma M \Sigma e^{A^\top \tau}\Sigma^{-1})\mathbf x+ const.
\end{equation}
Finally we have the actions of $\mathcal{F}_\tau$ and $\mathcal{B}_\tau$ on $q_M(\mathbf x)$:
\begin{equation}
    \mathcal{F}_\tau q_M(\mathbf x)=\mathbf x^\top(PMP^\top)\mathbf x + const.
\end{equation}
\begin{equation}
    \mathcal{B}_\tau q_M(\mathbf x)=\mathbf x^\top(QMQ^\top)\mathbf x + const.
\end{equation}
With careful choice of the constant term, we obtain a set of quadratic eigenfunctions
\begin{equation}
\begin{aligned}
    &\mathcal{F}_\tau q_{v_iv_i^\top}(\mathbf x)=\lambda_i^4q_{v_iv_i^\top}(\mathbf x),\\
    &\mathcal{F}_\tau q_{v_iv_j^\top+v_jv_i^\top}(\mathbf x)=\lambda_i^2\lambda_j^2 q_{v_iv_j^\top+v_jv_i^\top}(\mathbf x),
\end{aligned}
\end{equation}
\begin{equation}
\begin{aligned}
    &\mathcal{B}_\tau q_{u_iu_i^\top}(\mathbf x)=\lambda_i^4q_{u_iu_i^\top}(\mathbf x),\\
    &\mathcal{B}_\tau q_{u_iu_j^\top+u_ju_i^\top}(\mathbf x)=\lambda_i^2\lambda_j^2 q_{u_iu_j^\top+u_ju_i^\top}(\mathbf x).
\end{aligned}
\end{equation}
Since the number of eigenfunctions ($\frac{n(n+1)}{2}$) is equal to the dimensionality of the subspace spanned by $n\times n$ symmetric matrices, we obtained all quadratic eigenfunctions. For attractive dynamics, $0<\lambda_i^2<1$, so their eigenvalues are less than the greatest eigenvalue of linear functions. For repulsive dynamics, $\lambda_i^2>1$, so their eigenvalues are greater than the smallest eigenvalue of linear functions. 

\subsection{General situation}
On $\mathbb{R}^n$ consider a symmetric k-linear form $M_k:(\mathbb{R}^{n})^{\otimes k}\rightarrow \mathbb{R}$ such that $M_k(v_1\otimes v_2 \otimes ... v_k)=M_k(v_{\sigma(1)}\otimes v_{\sigma(2)} \otimes ... v_{\sigma(k)})$ for any permutation $\sigma \in S_k$. Its matrix representation is given by $M_k^{I_1I_2...I_k}=M_k(e_{I_1}\otimes e_{I_2}\otimes ... e_{I_k})$ where $\{e_{i}:i=1,2...n\}$ is the standard orthonormal basis of $\mathbb{R}^n$. Then we can consider $k$-th order tentative eigenfunctions of the form:
\begin{equation}
    q^k_M(\mathbf x) = M_k(\mathbf x^{\otimes k}) + O(k-2),
\end{equation}
where $O(k-2)$ denotes a (multivariate) polynomial up to the $(k-2)$-th order. Similar to the quadratic case, the eigenvalue is solely determined by the leading term. Thus, it is sufficient to analyze the actions of operators on $M_k$.

The actions of $\text{diag}(\rho_0)^{-1}\mathcal{L} \text{diag}(\rho_0)$ and $\mathcal{L}^\dagger$ on $M_k(\mathbf x^{\otimes k})$ are given by
\begin{equation}
\begin{aligned}
    &\text{diag}(\rho_0)^{-1}\mathcal{L} \text{diag}(\rho_0)M_k(\mathbf{x}^{\otimes k} )\\
    =& \rho_0(\mathbf{x})^{-1}[-\nabla\cdot(A \mathbf{x} \rho_0(\mathbf{x})M_k(\mathbf{x}^{\otimes k} ))+\frac{1}{2}\nabla\cdot(D\nabla \rho_0(\mathbf x)M_k(\mathbf{x}^{\otimes k} ))]\\
    =& -\langle A\mathbf x,\nabla M_k(\mathbf{x}^{\otimes k})\rangle+\langle\nabla \log \rho_0(\mathbf x), D\nabla M_k(\mathbf{x}^{\otimes k})\rangle+\frac{1}{2}\nabla\cdot(D\nabla M_k(\mathbf{x}^{\otimes k} ))\\
    =&\langle-(A+D\Sigma^{-1})\mathbf{x}, \nabla M_k(\mathbf{x}^{\otimes k} )\rangle + O(k-2)\\
    =&\langle\Sigma A^\top \Sigma^{-1}\mathbf x, \nabla M_k(\mathbf{x}^{\otimes k} )\rangle + O(k-2)\\
    =& \sum_{i=1}^k M_k\times_i (\Sigma^{-1}A\Sigma)(\mathbf{x}^{\otimes k})+ O(k-2),
\end{aligned}
\end{equation}
\begin{equation}
\begin{aligned}
&\mathcal{L}^\dagger M_k(\mathbf{x}^{\otimes k} )\\
= &\langle A \mathbf x,\nabla M_k(\mathbf{x}^{\otimes k} )\rangle+\frac{1}{2}\nabla\cdot(D\nabla M_k(\mathbf{x}^{\otimes k} ))\\
= &\sum_{i=1}^k M_k\times_i A^\top(\mathbf{x}^{\otimes k}) +O(k-2),\\
\end{aligned}
\end{equation}
where the Lyapunov equation and the fact $\mathcal{L}\rho_0 = 0$ have been used. $\times_i$ denotes tensor mode product: $(T\times_n A)_{i_1,...,i_{n_1},j,i_{n+1},...,i_k}\equiv \sum_{i_n}A_{ji_n}T_{i_1,...,i_n,...,i_k}$ satisfying the following algebra.
\begin{equation}
\begin{aligned}
    &T\times_iA \times_iB = T\times_i(BA),\\
    &T\times_iA \times_jB = T\times_jB \times_iA \quad i\neq j.
\end{aligned}
\end{equation}

Then we can determine the actions of finite-time operators on $M_k(\mathbf{x}^{\otimes k} )$.
\begin{equation}
    (\mathcal{L}^{\dagger})^m M_k(\mathbf{x}^{\otimes k})=\sum_{i_m=1}^k...\sum_{i_2=1}^k\sum_{i_1=1}^k M_k\times_{i_1} A^\top \times_{i_2} A^\top ... \times_{i_m} A^\top(\mathbf{x}^{\otimes k}) +O(k-2),
\end{equation}
\begin{equation}
\begin{aligned}
&e^{\mathcal{L}^\dagger \tau}M_k(\mathbf{x}^{\otimes k})\\
=&\sum_{m=0}^\infty \frac{\tau^m(\mathcal{L}^{\dagger})^m}{m!}M_k(\mathbf{x}^{\otimes k})\\
=& \sum_{\{l_1,l_2,...l_{k-1}\}=0}^\infty\sum_{m=\sum_{i=1}^{k-1} l_i}^\infty \frac{\tau^m}{m!}{m \choose l_1, l_2, \ldots, l_{k-1}, m-\sum_{i=1}^{k-1}l_i}\cdot\\
&M_k\times_1 (A^\top)^{l_1}\times_2 (A^\top)^{l_2}...\times_{k-1} (A^\top)^{l_{k-1}}\times_k (A^\top)^{m-\sum_{i=1}^{k-1}l_i}(\mathbf{x}^{\otimes k})+O(k-2)\\
=& M_k\times_1 e^{A^\top \tau}\times_2 e^{A^\top \tau}...\times_k e^{A^\top \tau}(\mathbf{x}^{\otimes k})+O(k-2).
\end{aligned}    
\end{equation}
Similarly,
\begin{equation}
\begin{aligned}
&\text{diag}(\rho_0)^{-1}e^{\mathcal{L} \tau} \text{diag}(\rho_0)M_k(\mathbf x^{\otimes k})\\
=&M_k\times_1 (\Sigma^{-1}e^{A \tau}\Sigma)\times_2 (\Sigma^{-1}e^{A \tau}\Sigma)...\times_k (\Sigma^{-1}e^{A \tau}\Sigma)(\mathbf{x}^{\otimes k})+O(k-2).
\end{aligned}
\end{equation}
Then the actions of $\mathcal{F}_\tau$ and $\mathcal{B}_\tau$ on $M_k(\mathbf x^{\otimes k})$ follow as
\begin{equation}
    \mathcal{F}_\tau M_k(\mathbf x^{\otimes k})=M_k\times_1 P\times_2 P...\times_k P(\mathbf{x}^{\otimes k})+O(k-2),
\end{equation}
\begin{equation}
    \mathcal{B}_\tau M_k(\mathbf x^{\otimes k})=M_k\times_1 Q\times_2 Q...\times_k Q(\mathbf{x}^{\otimes k})+O(k-2).
\end{equation}

We therefore obtain a set of $k$-th-order eigenfunctions of $\mathcal{F}_\tau$ and $\mathcal{B}_\tau$:
\begin{equation}
    \mathcal{F}_\tau q^k_{\sum_{\sigma\in S_k}v_{I_{\sigma(1)}}\otimes v_{I_{\sigma(2)}}\otimes ... v_{I_{\sigma(k)}}}(\mathbf{x}^{\otimes k})=(\prod_{i=1}^k \lambda_{I_i}^2) q^k_{\sum_{\sigma\in S_k}v_{I_{\sigma(1)}}\otimes v_{I_{\sigma(2)}}\otimes ... v_{I_{\sigma(k)}}}(\mathbf{x}^{\otimes k}),
\end{equation}
\begin{equation}
    \mathcal{B}_\tau q^k_{\sum_{\sigma\in S_k}u_{I_{\sigma(1)}}\otimes u_{I_{\sigma(2)}}\otimes ... u_{I_{\sigma(k)}}}(\mathbf{x}^{\otimes k})=(\prod_{i=1}^k \lambda_{I_i}^2) q^k_{\sum_{\sigma\in S_k}u_{I_{\sigma(1)}}\otimes u_{I_{\sigma(2)}}\otimes ... u_{I_{\sigma(k)}}}(\mathbf{x}^{\otimes k}).
\end{equation}
Again, these eigenvalues are less than the greatest eigenvalue of linear eigenfunctions for attractive dynamics and greater than the smallest eigenvalue of linear eigenfunctions for repulsive dynamics.

\section{Orthogonality analysis of singular vectors for OU processes}
\label{eig_to_sing}
    This section proves that asymptotically the top predictive mode converges to the top left eigenvector and the bottom retrospective mode converges to the bottom left eigenvector. This helps us understand the 2d saddle-point example in Fig. 1 and provides analytical support for the orthogonality analysis of data in Figs. 2 and 3. 

\subsection{Top eigenvector}

We start from the definitions of eigenvectors of transfer operators:
\begin{align}
\label{sigma}
e^{A^\top\tau}\Sigma^{-1}e^{A\tau}\Sigma \mathbf v_i = \lambda_i^2 \mathbf v_i,\quad \quad
     \Sigma^{-1}e^{A\tau}\Sigma e^{A^\top\tau} \mathbf u_i = \lambda_i^2 \mathbf u_i.
\end{align}
Notice that
\begin{equation}
\begin{aligned}
    &(e^{A^\top\tau}\Sigma^{-1}e^{A\tau}\Sigma)^\top \Sigma \mathbf v_i\\
    =&\Sigma e^{A^\top\tau}\Sigma^{-1} e^{A\tau}\Sigma\mathbf v_i\\
    =&\lambda_i^2\Sigma\mathbf v_i.
\end{aligned}
\end{equation}
Thus, $\{\mathbf v_i, \Sigma\mathbf v_i\}$ is a pair of right and left eigenvectors of $e^{A^\top\tau}\Sigma^{-1}e^{A\tau}\Sigma$. Similarly, $\{\mathbf u_i, \Sigma\mathbf u_i\}$ is a pair of right and left eigenvectors of $\Sigma^{-1}e^{A\tau}\Sigma e^{A^\top\tau}$. Therefore, $\mathbf u_i^\top \Sigma\mathbf u_j=\mathbf v_i^\top\Sigma\mathbf v_j=0 \text{ for } i\neq j$.\\

Also notice that
\begin{equation}
\begin{aligned}
    &e^{A^\top \tau}(\Sigma^{-1}e^{A\tau}\Sigma e^{A^\top\tau} \mathbf u_i)\\
    =&(e^{A^\top \tau}\Sigma^{-1}e^{A\tau}\Sigma) e^{A^\top\tau} \mathbf u_i\\
    =&\lambda_i^2 e^{A^\top \tau}\mathbf u_i.
\end{aligned}
\end{equation}
Thus, $\mathbf v_i\propto e^{A^\top \tau}\mathbf u_i$. Under proper normalization, these results can be summarized by
\begin{align}
    V^\top\Sigma V=Id,\quad U^\top\Sigma U=Id,\quad e^{A^\top \tau}U= V\Lambda,
\end{align}
where $U[:,i]\coloneq \mathbf u_i$, $V[:,i]\coloneq \mathbf v_i$, $\Lambda_{ij}\coloneq \delta_{ij}\lambda_i$. Thus, we have
\begin{align}
    &e^{A^\top \tau}=V\Lambda U^{-1}=V\Lambda U^\top \Sigma,\\
    &e^{A \tau}=\Sigma U \Lambda V^\top=\sum_i \lambda_i\Sigma \mathbf{u}_i\mathbf{v}_i^\top.
    \label{svd of e^(At)}
\end{align}
On the other hand, we have the eigen-decomposition
\begin{align}
    e^{A\tau}=\sum_i e^{a_i\tau}\mathbf r_i \mathbf l_i^\top,
    \label{eigen-decomposition of e^(At)}
\end{align}
where $a_i$ is the eigenvalue of $A$, $\mathbf r_i$ and $\mathbf l_i$ are right and left eigenvectors of $A$, respectively. We consider the asymptotic limit as $\tau\rightarrow \infty$, where the leading rank-1 term of $e^{A\tau}$ dominates. Comparing the leading terms of Eq. \eqref{svd of e^(At)} and Eq. \eqref{eigen-decomposition of e^(At)}, we have
\begin{align}
    \mathbf v_1 \propto \mathbf l_1, \quad \Sigma \mathbf u_1 \propto \mathbf r_1.
\end{align}
Thus, the top predictive mode is orthogonal to every eigen-direction except the most repulsive direction. In particular, $\mathbf v_1$ is orthogonal to the stable manifold.

\subsection{Bottom eigenvector}

Notice that under the transformation $A\rightarrow - A^\top $, we have
\begin{align}
\begin{aligned}
    e^{A\tau}\rightarrow & [e^{A\tau}]^{-\top} \\
    = & [\Sigma U \Lambda V^\top]^{-\top} \\
    = & U \Lambda^{-1} V^\top\Sigma\\
    = & \sum_i \lambda_i^{-1}\mathbf{u}_i(\Sigma\mathbf{v}_i)^\top.
\end{aligned}
\end{align}
And the eigen-decomposition becomes
\begin{align}
    e^{-A^\top\tau}=\sum_i e^{-a_i\tau}\mathbf l_i \mathbf r_i^\top,
    \label{eigen-decomposition of e^(-A^Tt)}
\end{align}
Similar to the first section, by matching the leading rank-1 term under the $\tau\rightarrow \infty$ limit, we have
\begin{align}
    \Sigma\mathbf v_n \propto \mathbf r_n, \quad  \mathbf u_n \propto \mathbf l_n.
\end{align}
Thus, the bottom retrospective mode is orthogonal to every eigen-direction except the most attractive direction. In particular, $\mathbf u_n$ is orthogonal to the unstable manifold.

\section{Galerkin approximation of transfer operators \cite{klus2024dynamical}}
Real neurons are exposed to only a finite number of channels or synaptic inputs. Mathematically, this means that it can only have access to a finite collection of observables, which we write as the Galerkin projection, referring to the projection of the SKO onto the linear span of these accessible measurement functions.
Let
\begin{equation}
\mathcal{F} = \{ f({\bf x}) \ |\ f({\bf x}) = \sum_{i=1}^d c_i \phi_i({\bf x})= \vec{c}\cdot \vec{\phi}({\bf x})\},
\end{equation}
denote the linear span of a finite set of basis functions $\{\phi_i\}_{i=1}^d$.
Since each $f \in \mathcal{F}$ is uniquely determined by the coefficient vector $\vec{c} \in \mathbb{R}^d$, the space $\mathcal{F}$ is naturally isomorphic to $\mathbb{R}^d$. Given some inner product, any linear operator $\mathcal{A}: \mathcal{F}\to\mathcal{F}$ admits a matrix representation $A:\mathbb{R}^d\to\mathbb{R}^d$ satisfying
\begin{align}\label{eq:galerkin}
[\mathcal{A}\ \ {\vec c}\cdot \vec{\phi}]({\bf x}) = (A \vec{c})\cdot \vec{\phi}({\bf x}),
\end{align}
where $A = (\Sigma)^{-1}\Sigma_\mathcal{A} $ and 
\begin{align}
\Sigma_{ij} &= \langle \phi_i, \phi_j \rangle, & [\Sigma_\mathsf{A}]_{ij} &= \langle \phi_i, \mathcal{A}\phi_j \rangle.
\end{align}
For functions $f = \vec{c}_f \cdot \vec{\phi}$ and $g = \vec{c}_g \cdot \vec{\phi}$ in $\mathcal{F}$, the inner product reduces to
\begin{align}\label{eq:innerprod}
\langle f,g \rangle &= \vec{c}_f\cdot  \Sigma \vec{c}_g.
\end{align}
This motivates the Galerkin projection of the SKO, which is a linear operator describing the application of $\mathcal{K}_\tau$, followed by projection onto $\mathcal{F}$. This operator maps $\mathcal{F}$ to itself, and admits a matrix representation, \cite{koltai2018,klus2024dynamical}
\begin{align}
  K_\tau &= (\Sigma)^{-1}\Sigma_{\mathcal{K}_\tau},
\end{align} 
where $\phi_i$ are assumed to be linearly independent such that $\Sigma$ is invertible. To specify $K_\tau$ entirely, one must define the inner product.

For a statistically stationary process with invariant density $\rho({\bf x})$, a natural inner product is defined as
\begin{align}
\langle f, g \rangle = \int_\mathcal{X} f({\bf x}) g({\bf x})\rho({\bf x})d{\bf x} = \mathbb{E}[f({\bf X}) g({\bf X})] \approx \frac{1}{S}\sum_{s=1}^S f({\bf X}(t_s)) g({\bf X}(t_s)).
\end{align}
The final approximation highlights the motivation for this choice in a data-driven context: the inner product can be computed directly from samples ${\bf X}(t_s) \sim \rho({\bf x})$ drawn from the process. 
Using this inner product 
\begin{align}
    [\Sigma]_{ij} \approx \frac{1}{S}\sum_{s=1}^S \phi_i({\bf X}(t_s)) \phi_j({\bf X}(t_s)),\quad\quad
    [\Sigma_{\mathsf{K}_\tau}]_{ij} \approx \frac{1}{S}\sum_{s=1}^S \phi_i({\bf X}(t_s)) \phi_j({\bf X}(t_s+\tau))
\end{align}
describe the correlation between measurement functions $i$ and $j$ measured instantaneously, or with delay $\tau$, respectively. Both $\Sigma_{\mathcal{K}_\tau}$ and $\Sigma^{-1}$ are simple to compute in biologically plausible online schemes.






















\section{Equivalence between CCA and coherent sets for Stable OU Process}
\label{CCA_OU}
    This Section proves that the analytical approach to  coherent set projection by finding the subdominant singular function is equivalent to the data-driven Galerkin projection \cite{klus2024dynamical}, which is also equivalent to information bottleneck and canonical correlation analysis \cite{chechik2003information}, for stable OU processes.

\begin{proposition}
For stable OU processes, the theoretically predicted filters agree exactly with the CCA solution. That is,
\begin{equation}
    C_0^{-1}C_\tau C_0^{-1}C_{-\tau}=e^{A^\top \tau}\Sigma^{-1}e^{A \tau}\Sigma, \quad
    C_0^{-1}C_{-\tau} C_0^{-1}C_{\tau}=\Sigma^{-1}e^{A \tau}\Sigma e^{A^\top \tau}.
    \label{equivalence}
\end{equation}
\end{proposition}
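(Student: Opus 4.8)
The plan is to reduce the proposition to one genuine computation — the time-lagged covariance of the stationary OU process — after which both identities fall out by a single line of matrix algebra. First I would fix the CCA observables to be the raw state coordinates, so that the correlation matrices are $C_0=\mathbb{E}[\mathbf{x}_t\mathbf{x}_t^\top]$, $C_\tau=\mathbb{E}[\mathbf{x}_t\mathbf{x}_{t+\tau}^\top]$, and $C_{-\tau}=\mathbb{E}[\mathbf{x}_t\mathbf{x}_{t-\tau}^\top]$, exactly the lag-$0$ and lag-$\pm\tau$ blocks to which the Galerkin inner products of the previous section reduce when $\vec\phi(\mathbf{x})=\mathbf{x}$. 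Stationarity gives at once $C_0=\Sigma$ and $C_{-\tau}=C_\tau^\top$; this is the only place stability of $A$ is needed, since it is what guarantees that the invariant density Eq.~(14, main text) is normalizable and that $\Sigma$ exists, is positive-definite, and is therefore invertible.

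The one nontrivial ingredient is $C_\tau$. Rather than solving the SDE by variation of constants, I would reuse the Koopman action on linear observables already derived in Eq.~\eqref{finite l dagger}, namely $e^{\mathcal{L}^\dagger\tau}q_\theta(\mathbf{x})=(e^{A^\top\tau}\theta)^\top\mathbf{x}=\theta^\top e^{A\tau}\mathbf{x}$. Since by definition $e^{\mathcal{L}^\dagger\tau}q_\theta(\mathbf{x})=\mathbb{E}[\theta^\top\mathbf{x}_{t+\tau}\mid\mathbf{x}_t=\mathbf{x}]$ and this holds for every $\theta$, it encodes the conditional mean $\mathbb{E}[\mathbf{x}_{t+\tau}\mid\mathbf{x}_t]=e^{A\tau}\mathbf{x}_t$. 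The tower property then yields $C_\tau=\mathbb{E}[\mathbf{x}_t\,\mathbb{E}[\mathbf{x}_{t+\tau}\mid\mathbf{x}_t]^\top]=\mathbb{E}[\mathbf{x}_t\mathbf{x}_t^\top]\,e^{A^\top\tau}=\Sigma e^{A^\top\tau}$, and hence $C_{-\tau}=C_\tau^\top=e^{A\tau}\Sigma$.

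Finally I would substitute $C_0=\Sigma$, $C_\tau=\Sigma e^{A^\top\tau}$, $C_{-\tau}=e^{A\tau}\Sigma$ into the two products and cancel. The first collapses as $C_0^{-1}C_\tau C_0^{-1}C_{-\tau}=\Sigma^{-1}(\Sigma e^{A^\top\tau})\Sigma^{-1}(e^{A\tau}\Sigma)=e^{A^\top\tau}\Sigma^{-1}e^{A\tau}\Sigma$, which is precisely the matrix $P$ of the singular-spectrum section; the second gives $C_0^{-1}C_{-\tau}C_0^{-1}C_\tau=\Sigma^{-1}(e^{A\tau}\Sigma)\Sigma^{-1}(\Sigma e^{A^\top\tau})=\Sigma^{-1}e^{A\tau}\Sigma e^{A^\top\tau}$, which is $Q$. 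These are exactly the two right-hand sides claimed.

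I expect the main obstacle to be bookkeeping rather than anything analytically deep: the whole content sits in the lagged-covariance identity $\mathbb{E}[\mathbf{x}_{t+\tau}\mid\mathbf{x}_t]=e^{A\tau}\mathbf{x}_t$, and the remaining risk is merely keeping the transpose and ordering conventions straight so that $C_\tau$ and $C_{-\tau}$ land on the factors $\Sigma e^{A^\top\tau}$ and $e^{A\tau}\Sigma$ respectively rather than being swapped. It is worth stating explicitly that stability enters only for the existence and invertibility of $\Sigma$; the algebra itself is insensitive to the sign of the real parts of the eigenvalues of $A$, mirroring the remark made in the linear-eigenfunction subsection.
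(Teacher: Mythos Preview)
Your argument is correct and reaches the same three identities the paper needs, $C_0=\Sigma$, $C_\tau=\Sigma e^{A^\top\tau}$, $C_{-\tau}=e^{A\tau}\Sigma$, but by a genuinely different route. The paper writes the stationary solution as the stochastic convolution $X_t=\int_{-\infty}^t e^{A(t-s)}\xi(s)\,ds$ and computes each covariance directly: $C_0$ by inserting the Lyapunov relation $D=-(A\Sigma+\Sigma A^\top)$ and recognising a total derivative under the integral, and $C_{\pm\tau}$ by shifting one of the integration limits. You instead recycle two facts already established earlier in the supplement---that the invariant density is Gaussian with covariance $\Sigma$, and that the Koopman semigroup acts on linear observables as $e^{\mathcal L^\dagger\tau}q_\theta=(e^{A^\top\tau}\theta)^\top\mathbf{x}$---and combine them via the tower property, avoiding any integration. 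Your approach is more economical and makes the internal consistency of the paper transparent (the CCA matrices are literally the Galerkin matrices for $\vec\phi(\mathbf x)=\mathbf x$); the paper's approach is more self-contained, since it does not presuppose the Koopman calculation and shows explicitly how stability of $A$ enters through the vanishing of $e^{A(t-s)}\Sigma e^{A^\top(t-s)}$ at $s=-\infty$.
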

\begin{proof}
The solution of an OU process is given by
\begin{equation}
    X_t=\int_{-\infty}^t  e^{A(t-s)}\xi(s)ds.
\end{equation}
We can then directly calculate the covariance matrices as follows.
\begin{equation}
\begin{aligned}
    C_0& \coloneq \mathbb{E}(X_tX_{t}^{\top})\\
    &=\int_{-\infty}^t \int_{-\infty}^{t}e^{A(t-s)}\mathbb{E}[\xi(s)\xi(s')^\top] e^{A^\top(t-s')}dsds'\\
    &=\int_{-\infty}^t \int_{-\infty}^{t}e^{A(t-s)}D\delta(s-s') e^{A^\top(t-s')}dsds'\\
    &=\int_{-\infty}^t e^{A(t-s)}D e^{A^\top(t-s)}ds\\
    &=-\int_{-\infty}^t e^{A(t-s)}(A\Sigma+\Sigma A^\top) e^{A^\top(t-s)}ds\\
    &=\int_{-\infty}^t \frac{d[e^{A(t-s)}\Sigma e^{A^\top(t-s)}]}{ds}ds\\
    &=\Sigma,
\end{aligned}
\end{equation}

\begin{equation}
\begin{aligned}
    C_\tau&\coloneq\mathbb{E}(X_tX_{t+\tau}^{\top})\\
    &=\int_{-\infty}^t \int_{-\infty}^{t+\tau}e^{A(t-s)}\mathbb{E}[\xi(s)\xi(s')^\top] e^{A^\top(t+\tau-s')}dsds'\\
    &=C_0 e^{A^\top \tau}\\
    &=\Sigma e^{A^\top \tau},
\end{aligned}
\end{equation}

\begin{equation}
\begin{aligned}
    C_{-\tau}&\coloneq\mathbb{E}(X_tX_{t-\tau}^{\top})\\
    &=\int_{-\infty}^t \int_{-\infty}^{t-\tau}e^{A(t-s)}\mathbb{E}[\xi(s)\xi(s')^\top] e^{A^\top(t-\tau-s')}dsds'\\
    &=e^{A \tau}C_0 \\
    &=e^{A \tau}\Sigma.
\end{aligned}
\end{equation}
From these equalities one can straightforwardly derive Eq. \eqref{equivalence}, showing that, for stable OU processes, our analytical method is equivalent to the data-driven method for finding the subdominant eigenfunction.
\end{proof}

\section{Experimental Data Analysis}
\subsection{Dead Leaves Model Simulation}
The plateaued stimulus used in Figure 2A was generated by concatenating $tanh$ functions with varying amplitude, time constant, and bias and sampled at unitary time steps. In Figure 2, the time on the x axis is shown in the time steps of the dataset. To construct the lag vectors necessary to estimate the past and future temporal filters, the dataset was sampled by a sliding window 10 time steps long. The mean-centered lag vectors were processed by CCA. 









\subsection{Tufted/Mitral Cell Dataset}
Data were taken from \cite{gupta2015olfactory}. Briefly, anesthetized rats were presented with time-varying stimuli of multiple odors while tufted/mitral cells were recorded with extracellular tetrodes. For each odor-cell pair, an odor kernel best describing the response of a cell to a brief odor pulse was estimated. Kernels were estimated at 20 Hz with a total length of 2 seconds. 

Kernels were parametrized as Gabor functions of the form
\[K(t) = a e^{(\frac{-(t-\tau)}{w})^2} cos(2\pi(f+t-\tau)+\phi)\]
with $a$, $\tau$, $\omega$, $f$, and $\phi$ as free parameters.

Here, we further set all kernel values with a magnitude below 
$0.005$ to zero (to remove noise) and normalized all kernels to have unit $\ell^2$ norm. 

\subsection{Retinal Ganglion Cell Dataset}
Data were taken from \cite{KastnerBaccus2011}. Briefly, retinal ganglion cells were recorded using an electrode array during display of high contrast gaussian white noise. Spatio-temporal receptive fields (STRFs) were estimated using reverse correlation. For each cell, we extracted its temporal receptive field by computing a rank-1 decomposition using the SVD. This rank-1 approximation accounted for approximately 60\% of the variance of the STRFs. Temporal kernels were estimated at 200 Hz with a total length of 0.5 seconds. Temporal receptive fields were then parameterized and normalized using the same procedure as above (Tufted/Mitral Cell Dataset).

\begin{figure}[h!]
  \centering
  \includegraphics[width=0.8\textwidth]{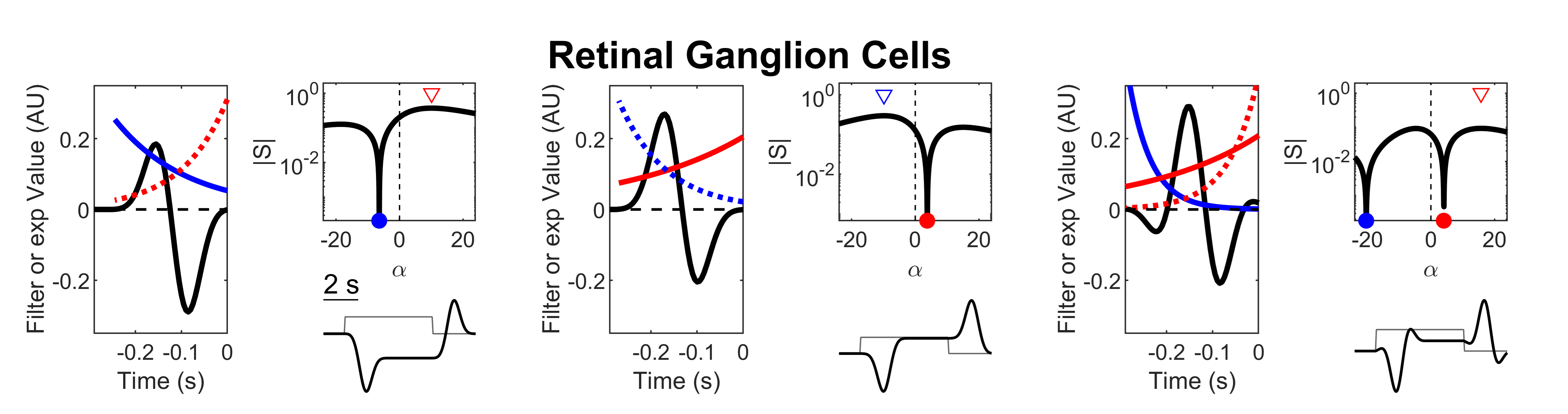}
  \vspace{-1mm}
  \caption{Sample temporal receptive fields from three retinal ganglion cells in the salamander retina. As in Figure 3, the black lines indicate individual linear filters of different neurons, the solid colored lines indicate exponential that are orthogonal to the filters, and the dotted colored lines indicate the exponentials that are most aligned with them. The bottom inset shows the convolution of the filter with a step pulse.}
  \vspace{-5mm}
\end{figure}

\subsection{Identifying Orthogonal Exponentials}
For each temporal receptive field in Figs. 2, 3, and S1, we numerically computed the cosine similarity between the filter and a normalized exponential. 

The cosine similarity S between two column vectors was computed as
$$S({\bf x},{\bf y}) = \frac{{\bf x}^\top {\bf y}}{\|{\bf x}\|\|{\bf y}\|}
$$
where $\|{\bf u}\|$ is the $\ell^2$ norm of the temporally discretized filter ${\bf u}\in \mathbb{R}^n$, where $n=40$ for the Tufted/Mitral cells and $n=100$ for the Retinal Ganglion cells. 

Exponentials had the form
\[E(t) = ce^{\alpha t}\]
where c was chosen such that the $L_2$ norm defined over the range of non-zero values of the temporal receptive field was 1. For tufted/mitral cells, $\alpha$ was swept between -10 and 10. For retinal ganglion cells, $\alpha$ was swept between -30 and 30. The zero-crossing of the cosine similarity as a function of $\alpha$ was identified as the  $\alpha$ yielding an orthogonal exponential. The computation for all 1549 neurons (204 Mitral/Tufted, 1345 RGCs) took less than 1 minute on a standard PC laptop (Intel i7-1260P, 16GB RAM).

\begin{figure}[h!]
  \centering
  \includegraphics[width=0.8\textwidth]{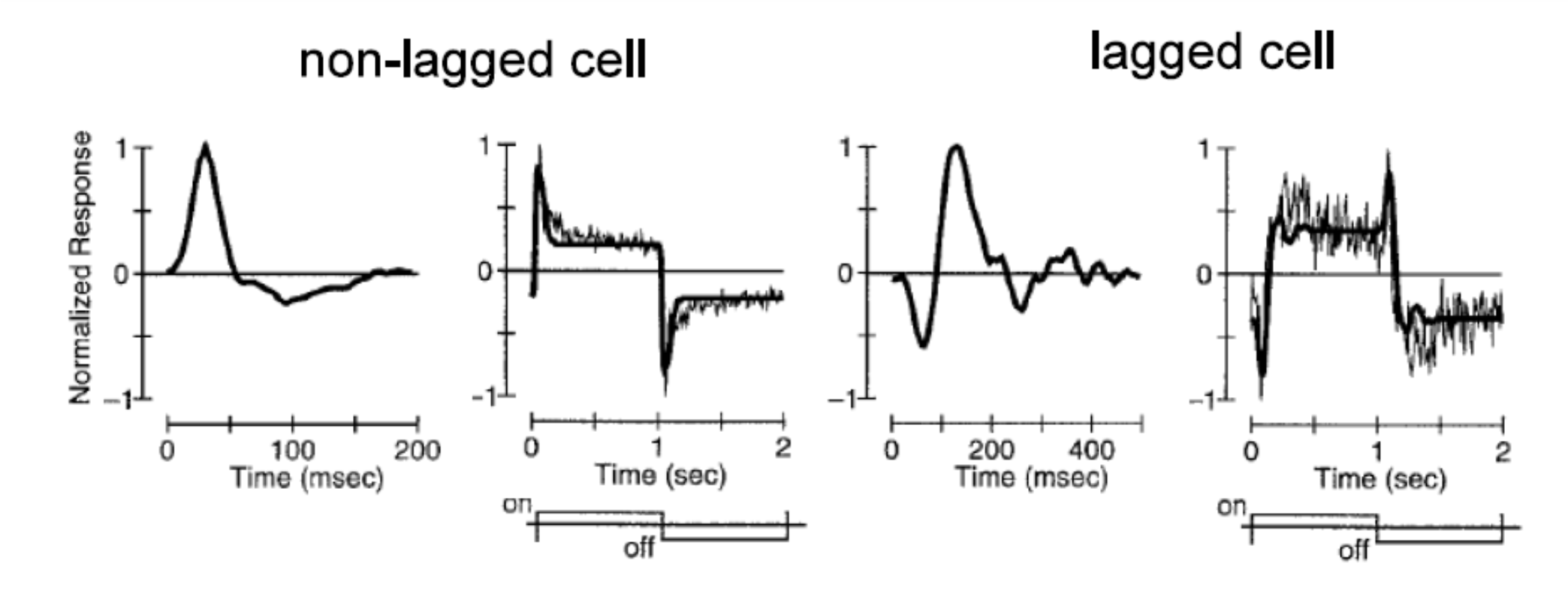}
  \vspace{-1mm}
  \caption{Sample temporal receptive fields and step responses from non-lagged and lagged cells of the cat LGN~\cite{cai1997spatiotemporal}. When comparing the temporal receptive fields with those in Fig. 3 and Fig. S1, note that the time axis is inverted. Step response lag accounts for the naming of the ceslls. 
  }
  \vspace{-5mm}
\end{figure}



\printbibliography